\newtheorem{theorem}{\bf Theorem}
\newtheorem{lemma}{\bf Lemma}
\newcommand{\cost}{\overline{\mathcal{J}}}
\newcommand\independent{\protect\mathpalette{\protect\independenT}{\perp}}
\def\independenT#1#2{\mathrel{\rlap{$#1#2$}\mkern2mu{#1#2}}}
\newcommand{\indi}[1]{{1\hspace{-2.3mm}{1}}_{\left\{#1\right\}}}
\newcommand{\m}[1]{\mathbf{#1}^m}
\newcommand{\lo}[1]{\log_2\left(#1\right)}
\newcommand{\lon}[1]{\ln\left(#1\right)}
\newcommand{\mk}[2]{\mathbf{#1}^{#2}}
\newcommand{\co}[1]{\ensuremath{\underline{\underline{\text{C}_{#1}}}}}
\newcommand{\expect}[1]{\mathbb{E}\left[{#1}\right]}
\newcommand{\expectp}[2]{\mathbb{E}_{#1}\left[{#2}\right]}
\definecolor{DarkGreen2}{rgb}{0.00,0.6,0.08}
\title{Implicit and explicit communication in decentralized control}
\author{Pulkit Grover and Anant Sahai\\ Department of EECS, University of California at Berkeley, CA-94720, USA\\ \{pulkit, sahai\}@eecs.berkeley.edu}
\begin{document}\maketitle
\begin{abstract}
There has been substantial progress recently in understanding toy problems of purely \textit{implicit} signaling. These are problems where the source and the channel are implicit --- the message is generated endogenously by the system, and the plant itself is used as a channel. In this paper, we explore how implicit and explicit communication can be used synergistically to reduce control costs. 

The setting is an extension of Witsenhausen's counterexample where a rate-limited external channel connects the two controllers. Using a semi-deterministic version of the problem, we arrive at a binning-based strategy that can outperform the best known strategies by an arbitrarily large factor. 

We also show that our binning-based strategy attains within a constant factor of the optimal cost for an asymptotically infinite-length version of the problem uniformly over all problem parameters and all rates on the external channel. For the scalar case, although our results yield approximate optimality for each fixed rate, we are unable to prove approximately-optimality uniformly over all rates. 
\end{abstract}
\section{Introduction}
\label{sec:intro}
In his layered approach to design of decentralized control systems~\cite{VaraiyaLayered}, Varaiya dedicates an entire layer for coordinating the actions of various agents. The question is: how can the agents build this coordination?

The most natural way to build coordination is through communication. To begin with, let us assume that the source and the channel have been specified explicitly. Even with this simplification, the general problem of multiterminal information theory has proven to be hard. The community therefore resorted to building a bottom-up theory that starts from Shannon's toy problem of point-to-point communication~\cite{ShannonOriginalPaper}. The insights and tools obtained from this toy problem have helped immensely in the continuing development of multiterminal information theory. 

A more accurate model of a dynamic control system is where the source can \textit{evolve} with time, reflecting the impact of random perturbations and control actions. A counterpart of Shannon's point-to-point toy problem that models evolution due to random perturbations is a problem of communicating an unstable Markov source across a channel. The problem is reasonably well understood~\cite{WongBrockett,BorkarMitter,TatikondaThesis,SahaiThesis,Mateev}, and again,  building on the understanding for this toy problem, the community has begun exploring multicontroller problems~\cite{YukselBasar2,YukselBasarMulticontroller}.

Do the above models encompass the possible ways of building coordination? Because these models are motivated by an architectural separation of estimation and control, they do not model the impact of control actions in state evolution\footnote{Communication has also been  used to build coordination by generating correlation between random variables~\cite{Cuff08,CuffCoordination}.}. Is this aspect important? Indeed, in decentralized control systems, it is often possible to modify what is to be communicated before communicating it. But at times, it is also often unclear what medium to use for communicating the message~\cite[Ch. 1]{PulkitThesis}. That is, the sources and the channels may not be not as explicit as assumed in traditional communication models. To understand this issue, we informally define \textit{implicit communication} to be one of the following two phenomena arising in decentralized control:
\begin{itemize}
\item Implicit message: the message itself is generated endogenously by the control system.
\item Implicit channel: the system to be controlled is used as a channel to communicate.
\end{itemize}
%A close engineering analogy is that of synchronization of nearly identical coupled oscillators that share the substrate (studied using the well known Kuramoto model~\cite{??}). These oscillators do not seem to know in advance \textit{what} they are communicating. The medium of communication is the substrate itself. 
The first phenomenon, that of implicit messages, poses an intellectual challenge to information theorists. How does one communicate a message that is endogenously generated, and hence can potentially be affected by the policy choice? 

The second phenomenon, that of viewing the plant as an implicit communication channel, is challenging from a control theoretic standpoint. The control actions now perform a dual role --- control of the system (\textit{i.e.} minimizing immediate costs), and communication through the system (presumably to lower future costs).

\begin{figure}[htbp] %  figure placement: here, top, bottom, or page
   \centering
   \includegraphics[width=2in]{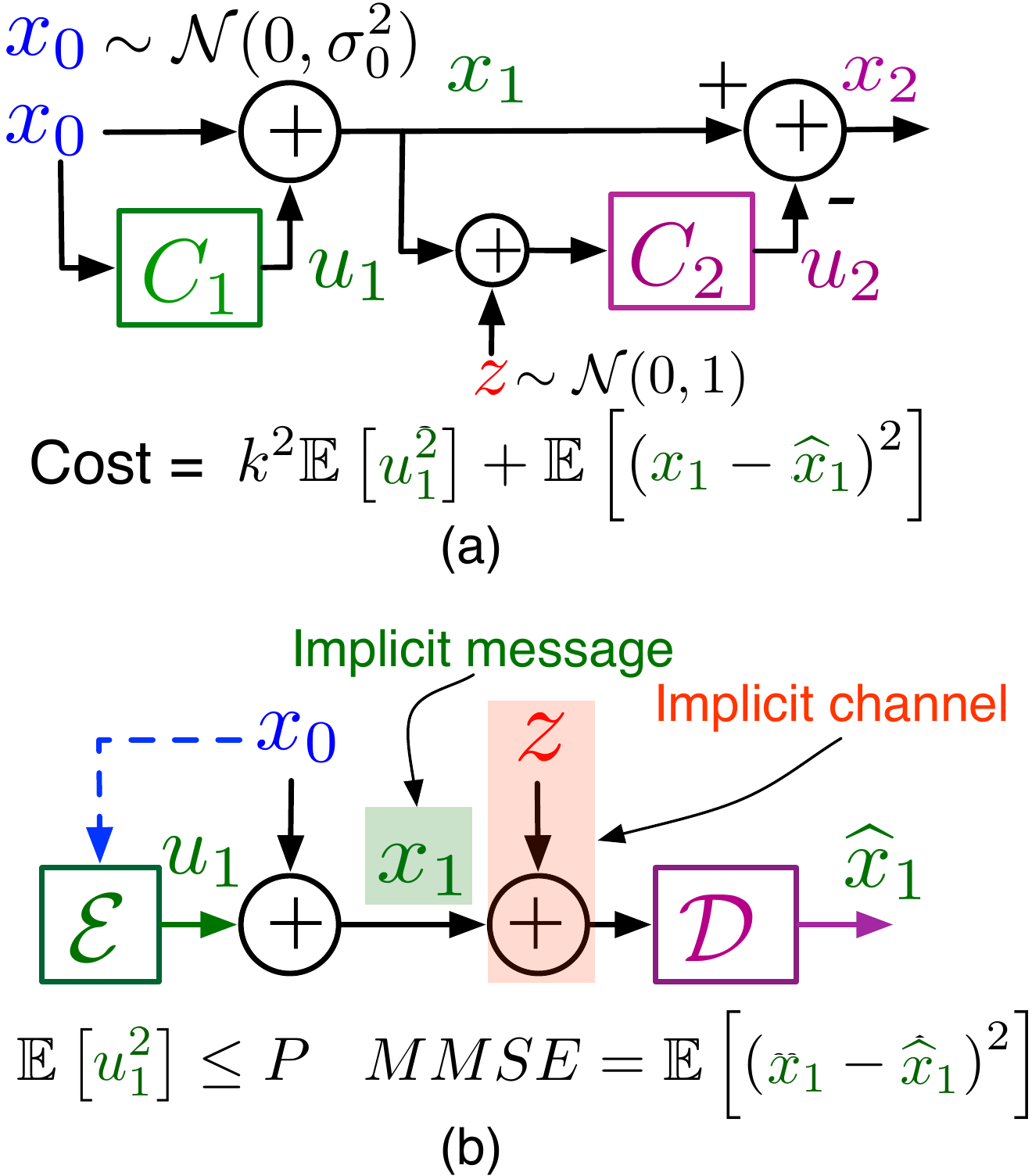} 
   \caption{The Witsenhausen counterexample, shown in (a) is the minimalist toy problem that exhibits the two notions of implicit communication, shown in (b), which is an equivalent representation~\cite{WitsenhausenJournal}. }
   \label{fig:wcounter}
\end{figure}

The counterpart of Shannon's point-to-point problem in implicit communication is a decentralized two-controller problem called Witsenhausen's counterexample~\cite{Witsenhausen68} shown in Fig.~\ref{fig:wcounter}. The message, state $x_1$, is implicit, because it can be affected by the input $u_1$ of the first controller. The channel is implicit because the system state itself is used to communicate the message.

Despite substantial efforts of the community, the counterexample remains unsolved, and due to this the community could not build on the problem to address larger control networks of this nature. Recently, however, we showed that using the input to quantize the state (complemented by linear strategies) attains within a constant factor of the optimal cost uniformly over all problem parameters for the counterexample and its vector extensions~\cite{WitsenhausenJournal,FiniteLengthsWitsenhausen}. Building on this provable approximate-optimality we have been able to obtain similar results for many extensions to the counterexample\footnote{Approximate-optimality results of this nature have proven useful in information theory as well ---  building on smaller problems~\cite{EtkinOneBit}, significant understanding has been gained about larger systems~\cite{SalmanThesis}. }~\cite{ITW10Paper,Allerton09Paper,ISIT10Witsenhausen,CDC10Paper,PulkitThesis}.

When is it useful to communicate implicitly? To understand this, Ho and Chang~\cite{HoChang} introduce the concept of partially-nested information structures. Their results can be interpreted in the following manner: when transmission delay across a noiseless, infinite-capacity external channel is smaller than the propagation delay of implicit communication, there is no advantage in communicating implicitly\footnote{The same conclusion is drawn in work of Rotkowitz an Lall~\cite{RotkowitzLall} (as an application of quadratic-invariance) and that of Y\"{u}ksel~\cite{YukselStochastic} in more general frameworks.}. The system designer always has the engineering freedom to attach an external channel. Can this external channel obviate the need to consider implicit communication?

In practice, however, the channel is \textit{never} perfect. In~\cite[Ch. 1]{PulkitThesis}, we compare problems of implicit and explicit communication where the respective channels are noisy. Assuming that the weights on quadratic costs on inputs and reconstruction are the same for implicit and explicit communication, we show that implicit communication can outperform various architectures of explicit communication by an arbitrarily large factor! The gain is due to implicit nature of the messages --- the simplified source after actions of the controller can be communicated with much greater fidelity for the same power cost.

So an external channel should not be thought of as a substitute for implicit communication. But if an external channel is available, how should it be used in conjunction with implicit communication? To examine this, we consider an extension of Witsenhausen's counterexample (shown in Fig.~\ref{fig:synergy}) where an external channel connects the two controllers. A special case when the channel is power constrained and has additive Gaussian noise has been considered by Shoarinejad \textit{et al}~\cite{Shoarinejad} and Martins~\cite{MartinsSideInfo}. Shoarinejad \textit{et al} observe that when the channel noise variance diverges to infinity, the problem approaches Witsenhausen's counterexample, while linear strategies are optimal in the limit of zero noise. Martins considers the case of finite noise variance and shows that in some cases, there exist nonlinear strategies that outperform all linear strategies. 

In Section~\ref{sec:deterministic}, we provide an improvement over Martins's strategy based on intuition obtained from a semi-deterministic version of the problem. In Section~\ref{sec:gaussian}, we show that our strategy can outperform Martins's strategy by an arbitrarily large factor. Because we interpret the problem as communication across two parallel channels --- an implicit one and an explicit one --- our strategy ensures that the information on implicit and explicit channels is essentially orthogonal. Without the implicit channel output, the message our strategy sends on the explicit channel would yield little information about the state. But the observations on the two channels jointly reveal a lot more about the state. This eliminates a redundancy in Martins's strategies where the same message is duplicated over the implicit and explicit channels. In this sense, our results here also provide a justification for the utility of the concept of implicit communication.

For simplicity, we assume a fixed-rate noiseless external channel for most of the paper. In Section~\ref{sec:asymptotic}, our binning strategy is proved to be approximately optimal for all problem parameters and all rates on the external channel for an \textit{asymptotic vector version of the problem}. In Section~\ref{sec:scalar}, using tools from large-deviation theory and KL-divergence, we obtain a lower bound on the costs for finite vector-lengths. Using this lower bound, we show that our improved strategy is within a constant factor of optimal for any fixed rate $R_{ex}$ on the external channel for the \textit{scalar case}. However, we do not yet have an approximately-optimal solution that is uniform over \textit{external channel's rate} ---  the ratio of upper and lower bounds diverges to infinity as $R_{ex}\rightarrow\infty$. We conclude in Section~\ref{sec:conclusions}.

\begin{figure}[htbp] %  figure placement: here, top, bottom, or page
   \centering
   \includegraphics[width=2in]{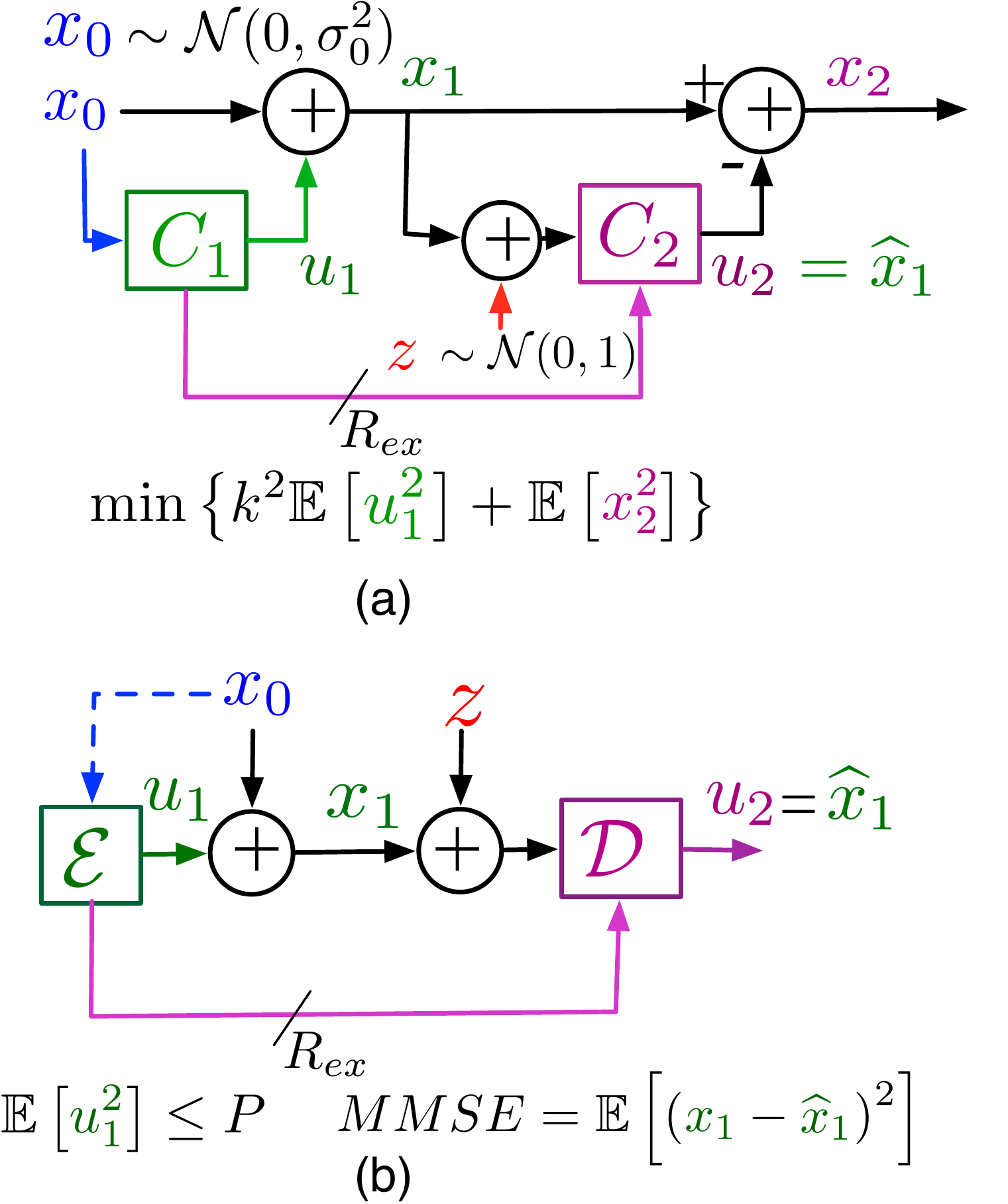} 
   \caption{The scalar version of the problem of implicit and explicit communication considered in this paper. An external channel connects the two controllers. In absence of implicit communication, the optimal strategy is linear. In absence explicit communication, an approximately-optimal strategy is quantization. Therefore, a natural strategy for this problem of implicit and explicit communication, proposed in~\cite{MartinsSideInfo}, is to communicate linearly over the external channel, and use quantization over the implicit channel. Fig.~\ref{fig:ComparisonWithNuno} shows that our binning-based synergistic strategy can outperform this natural strategy by an arbitrarily large factor.}
   \label{fig:synergy}
\end{figure}

%To explore this question, we consider an extension of Witsenhausen's counterexample where an external cannel that is noise-free but of limited rate $R_{ex}$ connects the first controller to the second  (see Fig.~\ref{fig:synergy}). This formulation is a simplification of one considered by Shoarinejad \textit{et al}~\cite{shoarinejad} and Martins~\cite{MartinsSideInfo} where the external channel is a power constrained Gaussian noise channel. Shoarinejad \textit{et al} showed that [...]. Inspired by results for the original Witsenhausen's counterexample, Martins provided a quantization-based nonlinear strategy that outperforms all linear strategies. While Martins's strategy is nonlinear on the implicit channel, it is still linear on the explicit channel --- the first controller sends the quantized state uncoded over the external channel. 
%
%
%(rewrite this part) Martins used nonlinear quantization-based strategies that outperform linear strategies even without using an external channel. Here, we derive improved strategies based on the concept of binning in information theory that attain within a constant factor of the optimal in the limit of infinite length. We show in Section~\ref{sec:synergy} that a synergistic use of implicit and external channels does not use a linear strategy over the external channel. Instead, the implicit channel is used to communicate coarse information of the state, and finer information is communicated over the external channel.
\section{Notation and problem statement}
\label{sec:probstat}
Vectors are denoted in bold, with a superscript to denote their length (e.g. $\m{x}$ is a vector of length $m$). Upper case is used for random variables or random vectors (except when denoting power $P$), while lower case symbols represent their realizations. Hats $(\,\widehat{\cdot{}}\,)$ on the top of random variables denote the estimates of the random variables. The block-diagram for the extension of Witsenhausen's counterexample considered in this paper is shown in Fig.~\ref{fig:synergy}. $\mathcal{S}^m(r)$ denotes a sphere of radius $r$ centered at the origin in $m$-dimensional Euclidean space $\mathbb{R}^m$. $Vol(A)$ denotes volume of the set $A$ in $\mathbb{R}^m$.

A control strategy is denoted by $\gamma=(\gamma_1,\gamma_2)$, where $\gamma_i$ is the function that maps the observations at $\co{i}$ to the control inputs. The first controller observes  $\m{y}_1= \m{x}_0$ and generates a control input $\m{u}_1$ that affects the system state, and a  message $W\in\{0,1,\ldots,2^{mR}-1\}$ (that can also be viewed as a control input) for the second controller that is sent across a parallel channel.

The second controller observes $\m{y}_2=\m{x}_1 + \m{z}$, where $\m{z}$ is the disturbance, or the noise at the input of the second controller. It also observes perfectly the message $W$ sent by the first controller. The total cost is a quadratic function of the state and the input given by:
\begin{equation}
\label{eq:cost}
J^{(\gamma)}(\m{x}_0,\m{z}) = \frac{1}{m}k^2\|\m{u}_1\|^2+\frac{1}{m}\|\m{x}_2\|^2,
\end{equation}
where $\m{u}_1=\gamma_1(\m{x}_0)$, $\m{x}_2=\m{x}_0+\gamma_1(\m{x}_0)-\m{u}_2$ where $\m{u}_2 = \gamma_2(\m{x}_0+\gamma_1(\m{x}_0) + \m{z})$.  The cost expression includes a division by the vector-length $m$ to allow for natural comparisons between different vector-lengths.

%%%%%

Subscripts in expectation expressions denote the random variable being averaged over (e.g. $\expectp{\m{X}_0,\m{Z}_G}{\cdot{}}$ denotes averaging over the initial state $\m{X}_0$ and the test noise $\m{Z}_G$).

\section{A semi-deterministic model}
\label{sec:deterministic}
\begin{figure}[htbp] %  figure placement: here, top, bottom, or page
   \centering
   \includegraphics[width=2.5in]{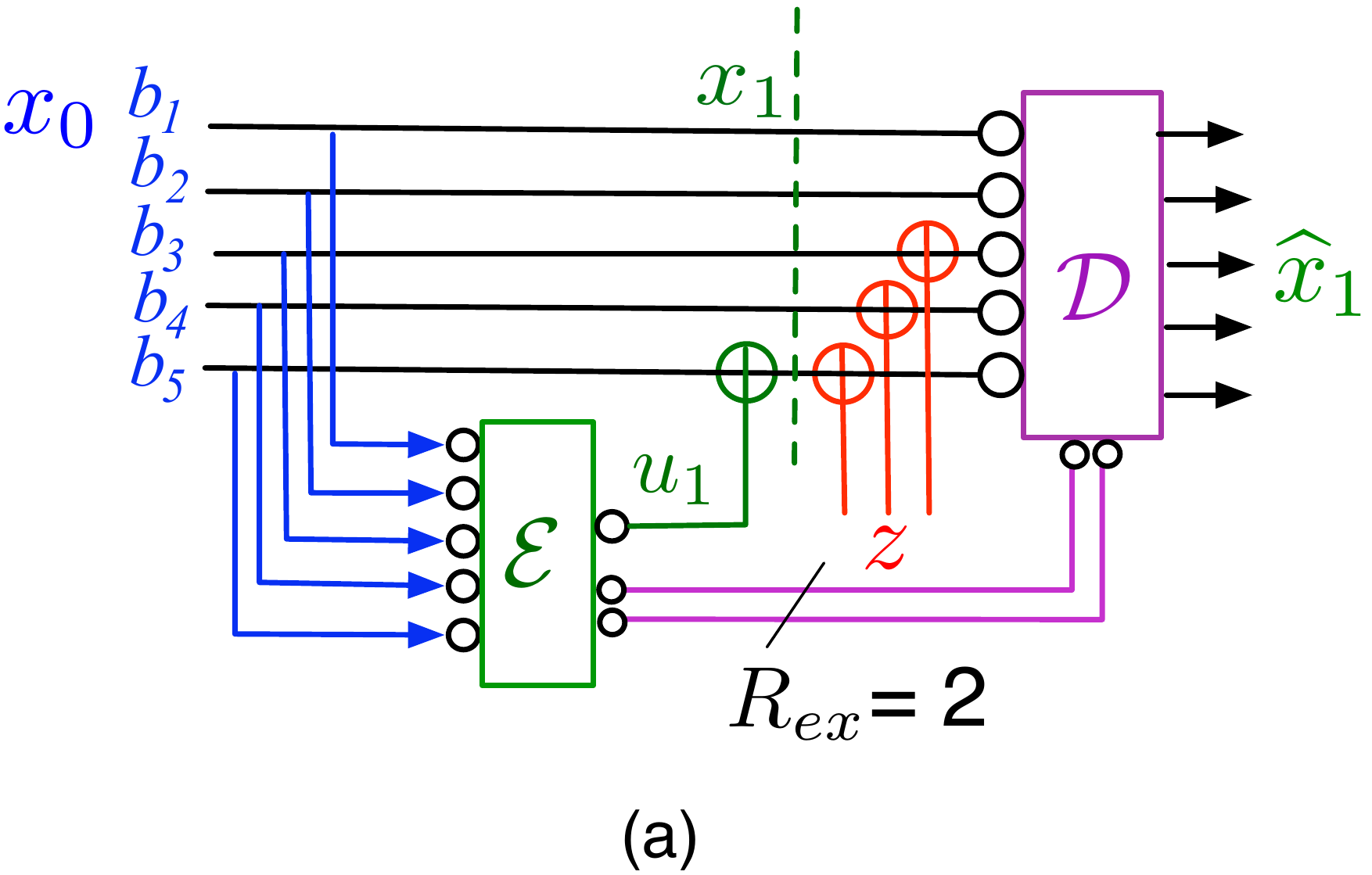}   \includegraphics[width=2.5in]{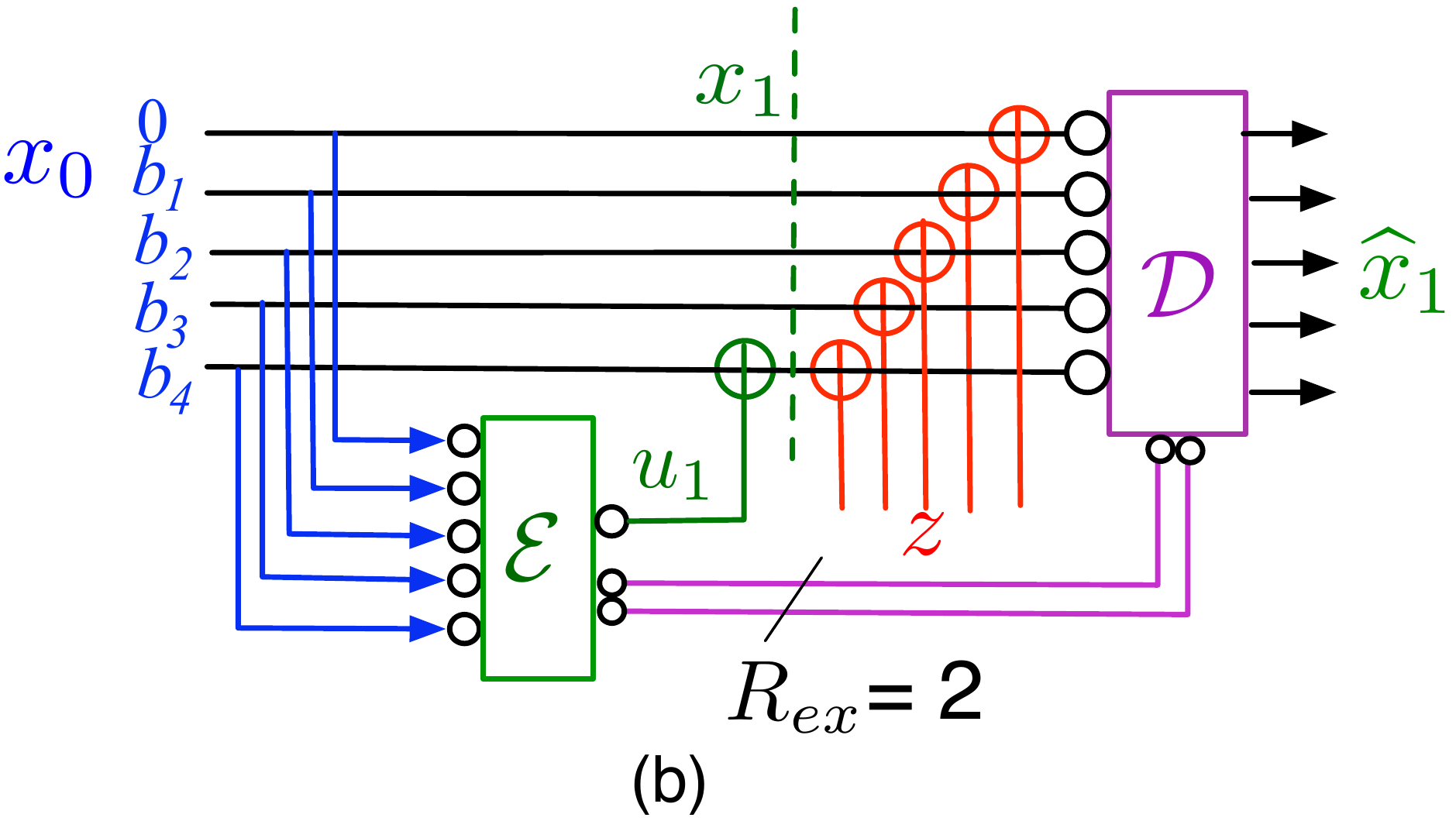}
   \caption{ A semi-deterministic model for the toy problem of implicit and explicit communication. An external channel (for this example, of capacity two bits) connects the two controllers. The case $\sigma_0^2>1$ is shown in (a), while $\sigma_0^2<1$ is shown in (b).}
   \label{fig:deterministic}
\end{figure}
We extend the deterministic abstraction of Gaussian communication networks proposed in~\cite{SalmanThesis,DeterministicModel} to a semi-deterministic model for our problem of Section~\ref{sec:probstat}. 
\begin{itemize}
\item Each system variable is represented in binary. For instance, in Fig.~\ref{fig:deterministic}, the state is represented by $b_1 b_2  b_3 . b_4b_5$, where $b_1$ is the highest order bit, and $b_5$ is the lowest.
\item The location of the decimal point is determined by the signal-to-noise ratio ($SNR$), where signal refers to the state or input to which noise is added. It is given by $\lfloor \lo{SNR} \rfloor -1$. Noise can only affect the bit before the decimal point, and the bits following it that is, $b_3$, $b_4$ and $b_5$. 
\item The power of a random variable $A$, denoted by $pow(A)$ is defined as the highest order bit that is $1$ among all the possible (binary-represented) values that $A$ can take with nonzero probability\footnote{We note that our definition of $pow(A)$ is for clarity and convenience, and is far from unique in amongst good choices.}. For instance, if $A\in\{0.01,0.11,0.1,0.001\}$, then $A$ has the power $pow(A)=0.1$.

\item Additions/subtractions in the original model are replaced by bit-wise XORs. Noise is assumed to be iid Ber(0.5).
\item The capacity of the external channel in the semi-deterministic version is the integer part (floor) of the capacity of the actual external channel.
%\item I have to relate the signal power to the noise power as well. It should be the same as in David's model, but it needs to be written down.
\end{itemize}
We note here that unlike in the information-theoretic deterministic model of ~\cite{SalmanThesis}, the binary expansions in our model are valuable even after the decimal point (below noise level). Indeed, the model is not deterministic as random noise is modeled in the system\footnote{An erasure-based deterministic model for noise can instead be used. This model also has the same optimal strategies.}. This move from deterministic to semi-deterministic models is needed in decentralized control because one of the three  roles of control actions is to improve the estimability of the state when observed noisily (the other two roles being control and communication). Since smart choices of control inputs can reduce the state uncertainty in the LQG model, a simplified model should allow for this possibility as well (the matter is discussed at length in~\cite{PulkitThesis}). 

The semi-deterministic abstraction for our extension of Witsenhausen's counterexample is shown in Fig.~\ref{fig:deterministic}. 
The original cost of $k^2 u_1^2+x_2^2$ now becomes  $k^2 pow(u_1)+pow(x_2)$.  As in Fig.~\ref{fig:synergy}, the encoder for this  semi-determinisitic model observes $x_0$ noiselessly. Addition is represented by XORs, with the relative power of the terms to be added deciding which bits are affected. For instance, in Fig.~\ref{fig:deterministic}, the power of the encoder input is sufficient to only affect the last bits of the state $x_0$. The noise bits are assumed to be distributed iid Ber(0.5).

\subsection{Optimal strategies for the semi-deterministic abstraction}

%P vs MMSE tradeoffs. For $P=0.01$, $MMSE=1$. For $P=0.1$, $MMSE=1$, for $P=1.0$, $MMSE = 0$, for $P>1.0$, $MMSE=0$. Thus the tradeoff is trivial for $P>1.0$. 
We characterize the optimal tradeoff between the input power $pow(u_1)$ and the power in the MMSE error $pow(x_2)$. The minimum total cost problem is a convex dual of this problem, and can be obtained easily. Let the power of $x_0$, $pow(x_0)$ be $\sigma_0^2$. The noise power is assumed to be $1$.
%%%%%%%%

\textit{Case 1}: $\sigma_0^2> 1$.\\
This case is shown in Fig.~\ref{fig:deterministic}(b). The bits $b_1,b_2$ are communicated noiselessly to the decoder, so the encoder does not need to communicate them implicitly or explicitly. The external channel has a capacity of two bits, so it can be used to communicate two of $b_3,b_4$ and $b_5$. It should be used to communicate the higher-order bits among those corrupted by noise, \textit{i.e.}, bits $b_3, b_4$. The control input $u_1$ should be used to modify the lower-order bits (bit $b_5$ in Fig.~\ref{fig:deterministic}). In the example shown, if $P<0.01$, $MMSE=0.01$, else $MMSE=0$. 

%\begin{figure}[htbp] %  figure placement: here, top, bottom, or page
%   \centering
%   \includegraphics[width=3.5in]{SemiDeterministicModelExternalChannel2}
%   \caption{ A semi-deterministic model for the toy problem of implicit and explicit communication for $\sigma_0^2$ smaller than noise power, $1$.}
%   \label{fig:deterministic2}
%\end{figure}
%\textit{Case 2}: $\sigma_0^2\leq 1$. 

In this case (shown in Fig.~\ref{fig:deterministic}(b)), the signal power is smaller than noise power. All the bits are therefore corrupted by noise, and nothing can be communicated across the implicit channel. In order for the decoder to be able to decode any bit in the representation of $x_1$, it must either a) know the bit in advance (for instance, encoder can force the bit to $0$), or b) be communicated the bit on the external channel. Since the encoder should use minimum power, it is clear that the most significant bits of the state (bits $b_1,b_2$ in Fig.~\ref{fig:deterministic}(b)) should be communicated on the external channel. The encoder, if it has sufficient power, can then force the lower order bits ($b_3, b_4$ in Fig.~\ref{fig:deterministic}(b)) of $x_1$ to zero. In the example shown in Fig.~\ref{fig:deterministic}(b), if $P<0.001$, $MMSE=0.001$, else $MMSE=0$. %This strategy is optimal because for $P<0.001$, it is not possible to communicate $b_3, b_4$ of $x_1$ to the decoder. 

\subsection{What scheme does the semi-deterministic model suggest over reals?}
A linear communication scheme over the external channel would correspond to communicating the highest-order bits of the state. The scheme for the semi-deterministic abstraction (Section~\ref{sec:deterministic}) communicates instead the highest order bits \textit{that are at or below the noise level}. This suggests that the external channel should not be used in a linear fashion --- the higher order bits are already known at the decoder. Instead, the external channel should be used to  communicate bits that are corrupted by noise  --- more refined information about the state that is not already implicitly communicated by the noisy state itself.

The resulting scheme for the problem over reals is illustrated in Fig.~\ref{fig:binning}. The encoder forces lower order bits of the state to zero, thereby truncating the binary expansion, or effectively quantizing the state into bins. The higher order bits that are corrupted by noise ($b_3,b_4$ in Fig.~\ref{fig:deterministic}(a)) are communicated via the external channel. These bits can be thought of as representing the color, \textit{i.e.} the bin index, of quantization bins, where set of $2^{R_{ex}}$ consecutive quantization-bins are labelled with $2^{R_{ex}}$ colors with a fixed order (with zero, for instance, colored blue). The bin-index associated with the color of the bin is sent across the external channel. The decoder finds the quantization point nearest to $y_2$ that has the same bin-index as that received across the external channel.

The scheme is very similar to the binning scheme used for Wyner-Ziv coding of a Gaussian source with side information~\cite{WynerZiv}, which is not surprising because of similarity of our problem with the Wyner-Ziv formulation.

\begin{figure}[htbp] %  figure placement: here, top, bottom, or page
   \centering
\includegraphics[width=3.5in]{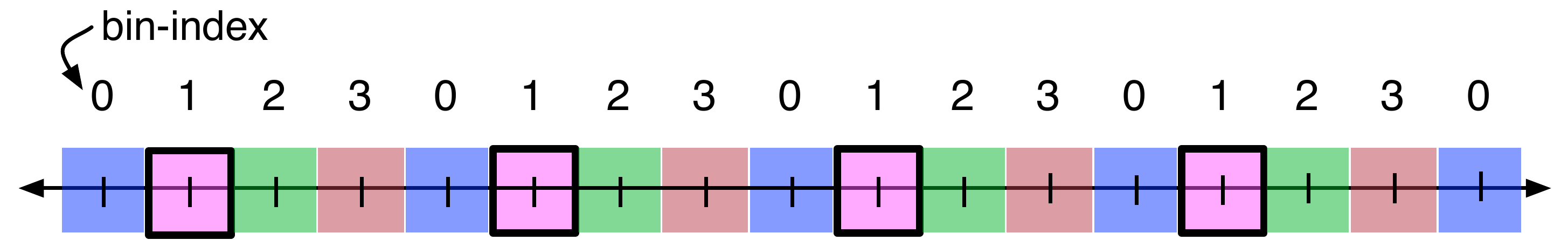} 
   \caption{ The strategy intuited from the semi-deterministic model naturally yields a binning-based strategy for reals that leads to a synergistic use of implicit and explicit communication. The external channel get the decoder the bin-index (in this example, the index is $1$). The more significant bits (coarse bin) is received from the implicit channel. Effectively,  use of the external channel increases the distance between the `valid' codewords by a factor of $2^{R_{ex}}$. }
   \label{fig:binning}
\end{figure}

\section{Gaussian external channel}
\label{sec:gaussian}
A more realistic model of the external channel is a power constrained additive Gaussian noise channel, which was considered in~\cite{Shoarinejad,MartinsSideInfo}. Without loss of generality, we assume that the noise in the external channel is also of variance $1$. 

At finite-lengths, an upper bound can be calculated using binning-based strategies. This binning-strategy turns out to outperform Martins's strategy by a factor that diverges to infinity. The key is to choose the set of problems where the initial state variance and the power on the external channel, denoted by $P_{ex}$, are almost equal. In this case, a strategy that communicates the state on the external channel is not helpful --- implicit channel can communicate the state at almost the same fidelity. Fig.~\ref{fig:ComparisonWithNuno} shows that fixing the relation $P_{ex}=\sigma_0^2$, as $\sigma_0^2\rightarrow\infty$, the ratio of costs attained by the binning strategy to that attained by Martins's strategy diverges to infinity.

\begin{figure}[htbp] %  figure placement: here, top, bottom, or page
   \centering
   \includegraphics[width=3in]{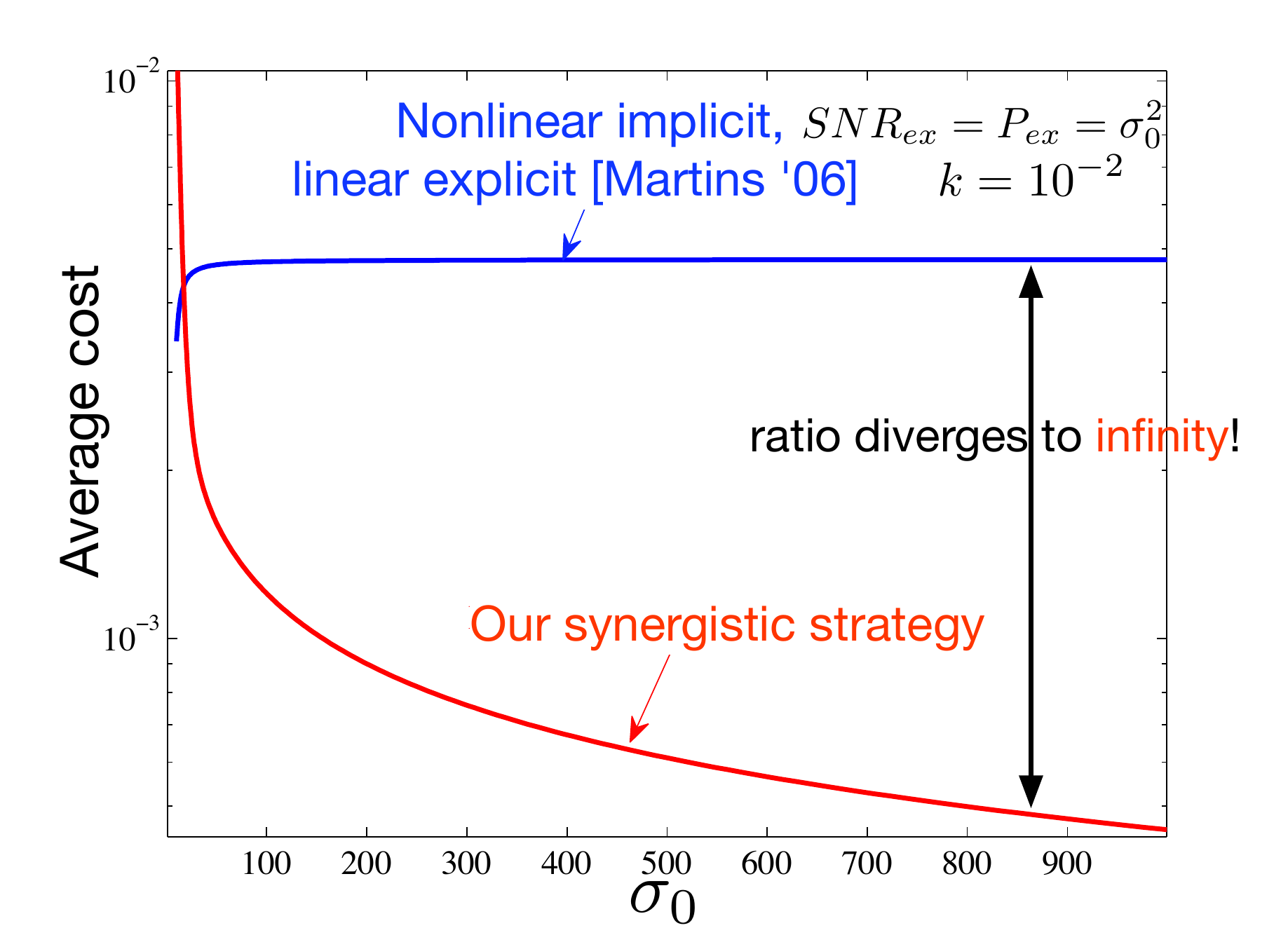} 
   \caption{If the SNR on the external channel is made to scale with SNR of the initial state, then our binning-based strategy outperforms strategy in~\cite{MartinsSideInfo} by a factor that diverges to infinity.}
   \label{fig:ComparisonWithNuno}
\end{figure}

\section{Asymptotic and scalar versions of the problem}
\subsection{Asymptotic version}
\label{sec:asymptotic}
We now show that the binning strategy of Section~\ref{sec:deterministic} is approximately-optimal in the limit of infinite-lengths.
%The differently colored bins naturally suggest a binning-based strategy for the problem, which we now show to be approximately optimal for the asymptotic problem. 

%\subsection{Nonlinear strategies that do not use the external channel can outperform linear strategies by an arbitrarily large factor}

%
%TROUBLE AT large $\sigma_0$ and large $R_{ex}$. The lower bound is allowing too much to be done at the channel by the encoder
%wait, maybe my upper bound is loose in the zero-input case. The receiver gets $\m{x}_0$. It does an MMSE estimation. Now the noise-sphere is of radius $\frac{\sigma_0^2}{\sigma_0^2+1}$, and it uses the external-channel side-info to do an estimation within this noise sphere. Sounds plausible. Needs proof.
%Yes, that seems to work. proof using lattices (With random binning) is cleaner.
\begin{theorem}
\label{thm:finiteratiosynergy}
For the extension of Witsenhausen's counterexample with an external channel connecting the two controllers,
\begin{eqnarray*}
&& \inf_{P\geq 0} k^2P + \left(\left(   \sqrt{\kappa_{new} } - \sqrt{P}  \right)^+ \right)^2 \\
& \leq & \cost{}_{opt} \leq \mu  \left(\inf_{P\geq 0}k^2P + \left(\left(   \sqrt{\kappa_{new} } - \sqrt{P}  \right)^+ \right)^2\right),
\end{eqnarray*}
where $\mu \leq 64$, $\kappa_{new}= \frac{\sigma_0^2 2^{-2R_{ex}}}{   \overline{P}  + 1   }$, where $\overline{P}=\left(\sigma_0+\sqrt{P}\right)^2$and the upper bound is achieved by binning-based quantization strategies. Numerical evaluation shows that $\mu <8$.

\end{theorem}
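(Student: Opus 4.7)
The plan is to separate the theorem into an achievability (upper bound) argument and a converse (lower bound) argument, glued together by the universal constant $\mu$. Since this is the asymptotic vector version, $m\to\infty$ is available, which lets me use standard typicality and sphere-covering tools that would be too weak at finite lengths.

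For the upper bound, I would instantiate the binning-based quantization scheme intuited from the semi-deterministic model of Section~\ref{sec:deterministic}. Concretely, fix a target input power $P\geq 0$. First construct a random codebook of roughly $2^{m I(X_1;X_0)}$ quantization points in $\mathbb{R}^m$ whose typical spacing around $\m{x}_0$ is $\sqrt{mP}$ (so $\frac{1}{m}\|\m{u}_1\|^2 \approx P$). Then randomly hash each codeword into one of $2^{mR_{ex}}$ bins, uniformly and independently. The first controller picks the nearest codeword, applies the associated $\m{u}_1$, and transmits the bin index noiselessly on the external channel. The second controller runs minimum-distance decoding restricted to the received bin, viewing $\m{y}_2$ as a noisy observation of the chosen codeword. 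A standard Wyner--Ziv-style calculation shows that the rate condition for vanishing error is met whenever $2^{2R_{ex}} > \sigma_0^2/(\kappa_{new}(\overline{P}+1))$, i.e.\ automatically, with $\overline{P}=(\sigma_0+\sqrt{P})^2$ upper-bounding $\frac{1}{m}E\|\m{x}_1\|^2$. After successful decoding, an MMSE-style shrinkage on the decoded quantization point yields residual MSE of order $\kappa_{new}$, giving a $\m{u}_2$ whose distance-to-$\m{x}_1$ matches the $\left(\sqrt{\kappa_{new}}-\sqrt{P}\right)^+$ form. Averaging in the rare error event (contributing a vanishing term as $m\to\infty$) and adding the $k^2P$ power cost produces the claimed form with some constant $\mu$; the numerical factors can be tracked through the typicality thresholds to yield the explicit bound $\mu\leq 64$.

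For the lower bound, the approach is to adapt the Witsenhausen-style converse from~\cite{WitsenhausenJournal,FiniteLengthsWitsenhausen} to accommodate the explicit channel. The key is to control $I(\m{X}_0;\m{U}_2)$ via a parallel-channel data-processing inequality: any strategy has $\m{U}_2$ as a function of $(W,\m{Y}_2)$, so the mutual information is bounded by the noiseless $R_{ex}$ nats on the external side plus the implicit Gaussian-channel capacity $\tfrac{1}{2}\log\!\left(1+\overline{P}\right)$, where the $\overline{P}$ absorbs whatever power the first controller spends up to $(\sigma_0+\sqrt{P})^2$. Combining this with the rate-distortion bound $\tfrac{1}{2}\log(\sigma_0^2/\text{MMSE})\leq I(\m{X}_0;\m{U}_2)$ yields an effective noise-plus-residual variance of $\kappa_{new}=\sigma_0^2 2^{-2R_{ex}}/(\overline{P}+1)$. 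A triangle-style inequality in $L^2$ between $\m{x}_2$, $\m{u}_1$, and the post-decoding error then produces $\sqrt{\text{MMSE}}\geq \sqrt{\kappa_{new}}-\sqrt{P}$ (truncated at zero), and infimizing $k^2P+\text{MMSE}$ over $P\geq 0$ gives the claimed lower bound.

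The main obstacle is producing the constant $\mu$ \emph{uniformly} in the problem parameters $(k,\sigma_0,R_{ex})$. The upper-bound calculation has to interpolate between two regimes: small $R_{ex}$ where the binning degenerates to pure implicit-channel quantization (recovering the constant-factor bound for vanilla Witsenhausen), and large $R_{ex}$ where the external channel almost completely determines the cost. Tuning the quantization granularity $P$ and the effective bin density simultaneously in both regimes, and ensuring the error-probability losses remain absorbable in $\mu$, is the delicate part. A parallel difficulty appears in the converse: for large $R_{ex}$, $\kappa_{new}$ shrinks and the $\left(\sqrt{\kappa_{new}}-\sqrt{P}\right)^+$ piece becomes vacuous at moderate $P$, so the lower bound has to be carried by the $k^2P$ term through a case split that matches the regimes used in the achievability argument.
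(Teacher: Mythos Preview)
Your converse plan is essentially the paper's: parallel-channel capacity $R_{ex}+\tfrac12\log(1+\overline P)$, Gaussian rate--distortion on $\m{X}_0$, and the $L^2$ triangle inequality (Lemma~\ref{lem:triangle}) to pass from the error in $\m{X}_0$ to the error in $\m{X}_1$. That part is fine.

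The achievability plan has a genuine gap. You propose a \emph{single} parametrized binning scheme and claim that after decoding you obtain ``residual MSE of order $\kappa_{new}$'' so that the cost matches the $(\sqrt{\kappa_{new}}-\sqrt P)^+$ form. This is not what the binning scheme actually delivers: once $\m{u}_1$ drives $\m{x}_0$ onto a codeword and the decoder recovers that codeword, the second-stage cost is (asymptotically) \emph{zero}, and the input power $P$ is not a free design parameter---it is pinned down by the rate equation $\tfrac12\log(\sigma_0^2/P)=R_{ex}+\tfrac12\log(1+\sigma_0^2-P)$, yielding $P\le 2\cdot 2^{-2R_{ex}}$. So the scheme produces total cost $\approx k^2\cdot 2\cdot 2^{-2R_{ex}}$, not a quantity of the shape $k^2P+((\sqrt{\kappa_{new}}-\sqrt P)^+)^2$. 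Trying to ``tune $P$'' within this one scheme does not give you the regimes you need.

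To get the uniform constant $\mu$, the paper does \emph{not} interpolate inside one scheme; it takes the minimum over three structurally different strategies: (i) the binning quantization above (cost $\lesssim 2k^2 2^{-2R_{ex}}$, useful when $k$ is small), (ii) an analog of zero-forcing that quantizes $\m{x}_0$ at rate $R_{ex}$ and forces the state there (cost $k^2\sigma_0^2 2^{-2R_{ex}}$, useful when $\sigma_0^2$ is small), and (iii) an analog of zero-input where $\m{u}_1=0$ and the decoder does Wyner--Ziv estimation of $\m{x}_0$ using $\m{y}_2$ as side information together with the $R_{ex}$ bits (cost $\lesssim\min\{\sigma_0^2,4\}2^{-2R_{ex}}$, useful when $k$ is large). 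The constant $64$ then comes from a case split on the minimizing $P^*$ in the lower bound (thresholds at $P^*\gtrless 2^{-2R_{ex}}/16$ and $\sigma_0^2\gtrless 1$), comparing in each case to whichever of the three upper bounds is tightest. Your proposal is missing strategies (ii) and (iii); without a zero-input--type scheme, the ratio blows up as $k\to\infty$, and without the case analysis you cannot extract a uniform $\mu$.
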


\begin{proof}
\textit{Lower bound} \\
We need the following lemma from~\cite[Lemma 3]{WitsenhausenJournal}.
\begin{lemma}
\label{lem:triangle}
For any three random vectors $A$, $B$ and $C$,
\begin{equation*}
\sqrt{\expect{\|B-C\|^2}}\geq \sqrt{\expect{\|A-C\|^2}} - \sqrt{\expect{\|A-B\|^2}}.
\end{equation*}
\end{lemma}
\begin{proof}
See~\cite{WitsenhausenJournal}.
\end{proof}
Substituting $\m{X}_0$ for $A$, $\m{X}_1$ for $B$, and $\m{U}_2$ for $C$ in Lemma~\ref{lem:triangle},
\begin{eqnarray}
\label{eq:sqrteqn}
\nonumber  && \sqrt{\expect{\|\m{X}_1-\m{U}_2\|^2}}\\
&& \geq  \sqrt{\expect{\|\m{X}_0-\m{U}_2\|^2}}-\sqrt{\expect{\|\m{X}_0-\m{X}_1\|^2}}.
\end{eqnarray}
We wish to lower bound  $\expect{\|\m{X}_1-\m{U}_2\|}$. The second term on the RHS is smaller than $\sqrt{mP}$. Therefore, it suffices to lower bound the first term on the RHS of~\eqref{eq:sqrteqn}. 

With what distortion can $\m{x}_0$ be communicated to the decoder? The capacity of the parallel channel is the sum of the two capacities $C_{sum}=R_{ex}+C_{implicit}$. The capacity $C_{implicit}$ is upper bounded by $\frac{1}{2}\lo{1+ \overline{P}}$ where $\overline{P}: = \left(\sigma_0 + \sqrt{P}\right)^2$. Using Lemma~\ref{lem:triangle}, the distortion in reconstructing $\m{x}_0$ is lower bounded by
\begin{eqnarray*}
D(C_{sum}) & = & \sigma_0^2 2^{-2C_{sum}}= \sigma_0^2 2^{-2 R_{ex} - 2C_{implicit}}\\
& \geq & \frac{\sigma_0^2 2^{-2R_{ex}}}{   \overline{P}  + 1   }=\kappa_{new}.
\end{eqnarray*}
Thus the distortion in reconstructing $\m{x}_1$ is lower bounded by
\begin{eqnarray*}
\left(\left(   \sqrt{\kappa_{new} } - \sqrt{P}  \right)^+ \right)^2.
\end{eqnarray*}
This proves the lower bound in Theorem~\ref{thm:finiteratiosynergy}.\\
\textit{Upper bound}\\
\textbf{Quantization}: 
This strategy is used for $\sigma_0^2>1$. Quantize $\m{x}_0$ at rate $C_{sum}=R_{ex}+C_{implicit}$. Bin the codewords randomly into $2^{nR_{ex}}$ bins, and send the bin index on the external channel. On the implicit channel, send the codeword closest to the vector $\m{x}_0$.

The decoder looks at the bin-index on the external channel, and keeps only the codewords that correspond to the bin index. This subset of the codebook, which now corresponds to the set of valid codewords, has rate $C_{implicit}$. The required power $P$ (which is the same as the distortion introduced in the source $\m{x}_0$) is thus given by
\begin{eqnarray*}
\frac{1}{2}\lo{\frac{\sigma_0^2}{P}} \leq R_{ex} + \frac{1}{2}\lo{1+\sigma_0^2 - P},
\end{eqnarray*}
which yields the solution $P = \frac{(1+\sigma_0^2) - \sqrt{  (1+\sigma_0^2)^2 - 4\sigma_0^2 2^{-2R_{ex}}  }}{2}$ which is smaller than $1$.  Thus,
\begin{eqnarray*}
P & = &\frac{(1+\sigma_0^2) - \sqrt{  (1+\sigma_0^2)^2 - 4\sigma_0^2 2^{-2R_{ex}}  }}{2}\\
& = & \frac{1}{2}(1+\sigma_0^2)\left(1 - \sqrt{1 - 4\frac{\sigma_0^2}{ (1+\sigma_0^2)^2} 2^{-2R_{ex}}  }\right).
%& \overset{R_{ex}\gg 1}{\approx} &  \frac{1}{2}(1+\sigma_0^2)\frac{1}{2}\frac{4\sigma_0^22^{-2R_{ex}}}{(1+\sigma_0^2)^2} = \frac{\sigma_0^2}{1+\sigma_0^2}2^{-2R_{ex}}.
\end{eqnarray*}
%Since we are using a vector quantization strategy only when $\sigma_0>1$, notice that
%\begin{eqnarray*}
%\frac{\sigma_0^2}{(\sigma_0^2+1)^2}&=& \frac{1}{\left(\sigma_0+\frac{1}{\sigma}\right)^2}\\
%&=& 
%\end{eqnarray*}
Now note that $\frac{\sigma_0^2}{(1+\sigma_0^2)^2}$ is a decreasing function of $\sigma_0^2$ for $\sigma_0^2>1$. Thus, $\frac{\sigma_0^2}{(1+\sigma_0^2)^2}<\frac{1}{4}$ for $\sigma_0^2>1$, and $1 - 4\frac{\sigma_0^2}{ (1+\sigma_0^2)^2} 2^{-2R_{ex}}>0$. Because $0<1 - 4\frac{\sigma_0^2}{ (1+\sigma_0^2)^2} 2^{-2R_{ex}}  <1$, 
\begin{eqnarray*}
\sqrt{1 - 4\frac{\sigma_0^2}{ (1+\sigma_0^2)^2} 2^{-2R_{ex}}  } \geq 1 - 4\frac{\sigma_0^2}{ (1+\sigma_0^2)^2} 2^{-2R_{ex}},
\end{eqnarray*}
and therefore 
\begin{eqnarray*}
P & \leq  & \frac{1}{2}(1+\sigma_0^2)\left(1 - \left(1 - 4\frac{\sigma_0^2}{ (1+\sigma_0^2)^2} 2^{-2R_{ex}}  \right)\right)\\
& = & \frac{1}{2}(1+\sigma_0^2)\left( 4\frac{\sigma_0^2}{ (1+\sigma_0^2)^2} 2^{-2R_{ex}} \right)\\
& = & \frac{2\sigma_0^2}{1+\sigma_0^2}2^{-2R_{ex}} \leq 2\times 2^{-2R_{ex}}.
\end{eqnarray*}

The other strategies that complement this binning strategy are the analogs of zero-forcing and zero-input. 

\textbf{Analog of the zero-forcing strategy}
The state $\m{x}_0$ is quantized using a rate-distortion codebook of $2^{mR_{ex}}$ points. The encoder sends the bin-index of the nearest quantization-point on the external channel. Instead of forcing the state all the way to zero, the input is used to force the state to the nearest quantization point. The required power is given by the distortion $\sigma_0^2 2^{-2R_{ex}}$. The decoder knows exactly which quantization point was used, so the second stage cost is zero. The total cost is therefore $k^2\sigma_0^2 2^{-2R_{ex}}$.

%In the scalar case, there are still $2^{R_{ex}}$ bins, only these are now equally spaced on the real-line. Need to calculate the required average power cost. Of course, the MMSE cost is still zero, since the decode again knows the chosen bin exactly.

\textbf{Analog of Zero-input strategy}

\textit{Case 1}: $\sigma_0^2\leq 4$.

Quantize the space of initial state realizations using a random codebook of rate $R_{ex}$, with the codeword elements chosen i.i.d $\mathcal{N}(0,\sigma_0^2(1-2^{-2R_{ex}}))$. Send the index of the nearest codeword on the external channel, and ignore the implicit channel. The asymptotic achieved distortion is given by the distortion-rate function of the Gaussian source $\sigma_0^22^{-2R_{ex}}$.

\textit{Case 2}: $R_{ex}\leq 2$. Do not use the external channel. Perform an MMSE operation at the decoder on the state $\m{x}_0$. The resulting error is $\frac{\sigma_0^2}{\sigma_0^2+1}$. 

\textit{Case 3}: $\sigma_0^2>4, R_{ex}>2$.

Our proofs in this part follow those in~\cite{Tse2005}. Let $R_{code}= R_{ex}+\frac{1}{2}\lo{\frac{\sigma_0^2}{3}}-\epsilon$. A codebook of rate $R_{code}$ is designed as follows. Each codeword is chosen randomly and uniformly inside a sphere centered at the origin and of radius $m\sqrt{\sigma_0^2-D}$, where $D=\sigma_0^2 2^{-2R_{code}}=3\times 2^{-2(R_{ex}-\epsilon)}$. This is the attained asymptotic distortion when the codebook is used to represent\footnote{In the limit of infinite block-lengths, average distortion attained by a uniform-distributed random-codebook and a Gaussian random-codebook of the same variance is the same~\cite{Tse2005}.} $\m{x}_0$.

Distribute the $2^{mR_{code}}$ points randomly into $2^{mR_{ex}}$ bins that are indexed $\{1,2,\ldots{},2^{mR_{ex}}\}$. The encoder  chooses the codeword $\m{x}_{code}$ that is closest to the initial state. It sends the bin-index (say $i$) of the codeword across the external channel. 

Let $\m{z}_{code}=\m{x}_0-\m{x}_{code}$. The received signal $\m{y}_2=\m{x}_0+\m{z}=\m{x}_{code}+\m{z}_{code}+\m{z}$, which can be thought of as receiving a noisy version of codeword $\m{x}_{code}$ with a total noise of variance $D+1$, since $\m{z}_{code}\independent \m{z}$. 

The decoder receives the bin-index $i$ on the external channel. Its goal is to find $\m{x}_{code}$. It looks for a codeword from bin-index $i$ in a sphere of radius $D+1+\epsilon$ around $\m{y}_2$. We now show that it can find $\m{x}_{code}$ with probability converging to $1$ as $m\rightarrow\infty$. A rigorous proof that MMSE also converges to zero can be obtained along the lines of proof in~\cite{WitsenhausenJournal}. %Even though this does not prove that the resulting $MMSE$ also converges to zero, this can be proved in a way very similar to that in~\cite{WitsenhausenJournal}. 

To prove that the error probability converges to zero, consider the total number of codewords that lie in the decoding sphere. This, on average, is bounded by 
\begin{eqnarray*}
&\frac{2^{mR_{code}}} {Vol\left(\mathcal{S}^m\left(m\sqrt{(\sigma_0^2-D+\epsilon)}\right)\right)}   Vol\left(\mathcal{S}^m\left(m\sqrt{D+1 +\epsilon}\right)\right)\\
& =  \frac{2^{m\left(R_{ex}-\epsilon+\frac{1}{2}\lo{\frac{\sigma_0^2}{3}}\right)}} {Vol\left(\mathcal{S}^m\left(m\sqrt{(\sigma_0^2-D+\epsilon)}\right)\right)}    Vol\left(\mathcal{S}^m\left(m\sqrt{D+1 +\epsilon}\right)\right)\\
& =  \frac{2^{m\left(R_{ex}-\epsilon+\frac{1}{2}\lo{\frac{\sigma_0^2}{3}} \right)}} {\left(m\sqrt{\sigma_0^2-D+\epsilon}\right)^m} \left(m\sqrt{D+1 +\epsilon}\right)^m \\
& =  2^{m(R_{ex}-\epsilon)} 2^{\left(\frac{m}{2}\lo{\frac{\sigma_0^2(D+1+\epsilon)}{3(\sigma_0^2-D+\epsilon)}} \right)}.
\end{eqnarray*}
Let us pick another codeword in the decoding sphere. Probability that this codeword has index $i$ is $2^{-mR_{ex}}$. Using union bound, the probability that there exists another codeword in the decoding sphere of index $i$ is bounded by  
\begin{eqnarray*}
&&2^{-mR_{ex}}2^{m(R_{ex}-\epsilon)} 2^{\left(\frac{m}{2}\lo{\frac{\sigma_0^2(D+1+\epsilon)}{3(\sigma_0^2-D+\epsilon)}} \right)}\\
&=& 2^{-m\epsilon}2^{\left(\frac{m}{2}\lo{\frac{\sigma_0^2(D+1+\epsilon)}{3(\sigma_0^2-D+\epsilon)}} \right)}.
\end{eqnarray*}
It now suffices to show that the second term converges to zero as $m\rightarrow\infty$. Since $D=3\times 2^{-2(R_{ex}-\epsilon )}$. Since $R_{ex}>2$, $D<\frac{3}{4}\times 2^{\epsilon}<\frac{5}{6}-\epsilon $ for small enough $\epsilon$. Since $\sigma_0^2>4$, $D<\frac{5}{6}\frac{\sigma_0^2}{4}<\frac{\sigma_0^2}{4} + \epsilon$,
\begin{eqnarray*}
\frac{\sigma_0^2(D+1+\epsilon)}{3(\sigma_0^2-D+\epsilon)} < \frac{\sigma_0^2\times\left(\frac{5}{6}+1\right)}{ 3\frac{3\sigma_0^2}{4}  }=\frac{\frac{11}{6}}{\frac{9}{4}}=\frac{22}{27}<1.
\end{eqnarray*}
Thus the cost here is bounded by $3\times 2^{-2(R_{ex}-\epsilon)}$ which is bounded by $4\times 2^{-2R_{ex}}$ for small enough $\epsilon$. 

\subsubsection{Bounded ratios for the asymptotic problem}
The upper bound is the best of the vector-quantization bound, $2k^22^{-2R_{ex}}$, zero-forcing $k^2\sigma_0^22^{-2R_{ex}}$, and zero-input bounds of $\sigma_0^2 2^{-2R_{ex}}$ and $4\times 2^{-2R_{ex}}$. 

\textit{Case 1}: $P^*>\frac{2^{-2R_{ex}}}{16}$.\\
In this case, the lower bound is larger than $k^2 \frac{2^{-2R_{ex}}}{16}$. Using the upper bound of $4\times 2^{-2R_{ex}}$, the ratio is smaller than $64$. 

\textit{Case 2}: $P^*\leq\frac{2^{-2R_{ex}}}{16},\sigma_0^2\geq 1$.\\
Since $R_{ex}\geq 0$, $P^*\leq\frac{1}{16}$. Thus,
\begin{eqnarray*}
\kappa_{new}=\frac{\sigma_0^2 2^{-2R_{ex}}}{(\sigma_0+\sqrt{P^*})^2+1}>\frac{1}{\left(1+\frac{1}{4}\right)^2+1}= \frac{16}{41}2^{-2R_{ex}}.
\end{eqnarray*}
Thus, the lower bound is greater than the $MMSE$ which is larger than
\begin{equation}
\left(  \sqrt{\frac{16}{41}}  -\sqrt{\frac{1}{16}}  \right)^22^{-2R_{ex}}\approx 0.14\times 2^{-2R_{ex}}.
\end{equation}
Using the upper bound of $4\times 2^{-2R_{ex}}$, the ratio is smaller than $29$.

\textit{Case 3}: $P^*\leq\frac{2^{-2R_{ex}}}{16},\sigma_0^2< 1$.\\
If $P^*>\frac{\sigma_0^22^{-2R_{ex}}}{25}$, using the upper bound of $\sigma_0^22^{-2R_{ex}}$, the ratio is smaller than $25$.

If $P^*\leq \frac{\sigma_0^22^{-2R_{ex}}}{25}<\frac{1}{25}$, 
\begin{eqnarray*}
\kappa_{new} &=& \frac{\sigma_0^22^{-2R_{ex}}}{(\sigma_0+\sqrt{P^*})^2+1}\\
& \geq & \frac{\sigma_0^22^{-2R_{ex}}}{\left(1 + \frac{1}{5}\right)^2+1}\sigma_0^22^{-2R_{ex}} = \frac{25}{61}\sigma_0^22^{-2R_{ex}}.
\end{eqnarray*} 
 Thus, a lower bound on $MMSE$, and hence also on the total costs, is
 \begin{eqnarray*}
 \left(\sqrt{\frac{25}{61}}-\sqrt{\frac{1}{25}}\right)^2\sigma_0^22^{-2R_{ex}}\approx 0.19 \sigma_0^22^{-2R_{ex}}.
 \end{eqnarray*}
Using the upper bound of $\sigma_0^22^{-2R_{ex}}$, the ratio is smaller than $\frac{1}{0.19}<6$. 

\begin{figure}[htbp] %  figure placement: here, top, bottom, or page
   \centering
   \includegraphics[width=3.4in]{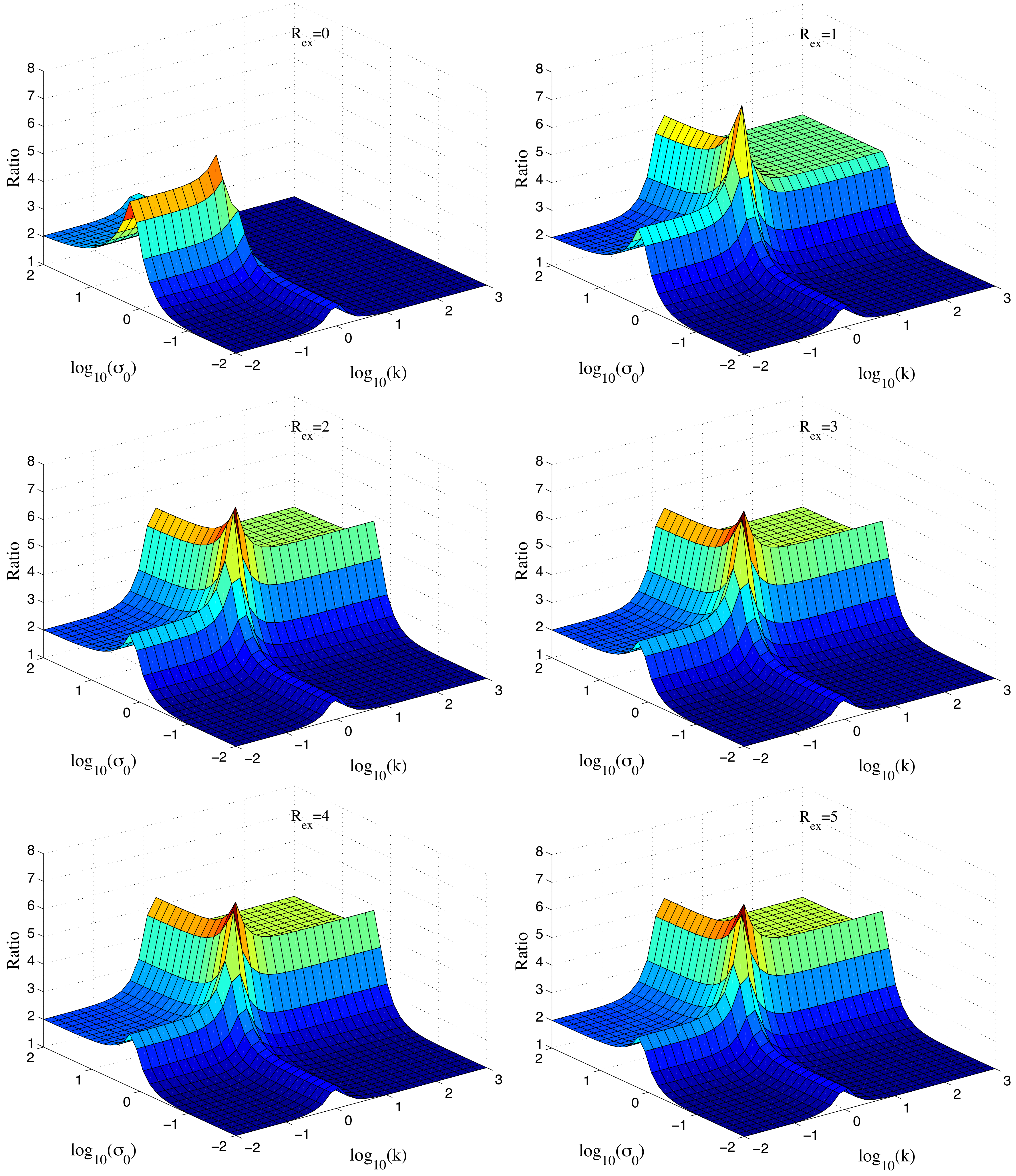} 
   \caption{The ratio of upper and lower bounds for the asymptotic problem are bounded by a factor of $8$ for all $k,\sigma_0$ and $R_{ex}$.}
   \label{fig:vector}
\end{figure}

Numerical evaluations, shown in Fig.~\ref{fig:vector}, show that the ratio is smaller than $8$.
\end{proof}
%%%%%%%%%%%%%%%%%%%%%%%%%%%%%%%
\subsection{Scalar case}
\label{sec:scalar}
We first derive a lower bound for finite-vector lengths.
The obtained bounds are tighter than those in Theorem~\ref{thm:finiteratiosynergy} and depend explicitly on the  vector length $m$.
\begin{theorem}[Refined lower bound for finite-lengths]
\label{thm:newbound}
For a finite-dimensional vector version of the problem, if for a strategy $\gamma(\cdot{})$ the average power $\frac{1}{m}\expectp{\m{X}_0}{\|\m{U}_1\|^2}=P$, the following lower bound holds on the second stage cost for any choice of $\sigma_G^2\geq 1$ and $L>0$
\begin{equation*}
\overline{J}_2^{(\gamma)}(m,k^2,\sigma_0^2) \geq \eta(P,\sigma_0^2,\sigma_G^2,L).
\end{equation*}
where $\eta(P,\sigma_0^2,\sigma_G^2,L)=$
\begin{eqnarray*}
&&\frac{\sigma_G^m}{c_m(L)}\exp\left(-\frac{mL^2(\sigma_G^2-1)}{2}\right)\\
&&\times\left(  \left(\sqrt{\kappa_2(P,\sigma_0^2,\sigma_G^2,L)} - \sqrt{P}\right)^+   \right)^2,
\end{eqnarray*}
where $\kappa_2(P,\sigma_0^2,\sigma_G^2,L):=$
\begin{eqnarray*}
\frac{\sigma_0^2\sigma_G^22^{-2R_{ex}}}{c_m^{\frac{2}{m}}(L)e^{1-d_m(L)}\left((\sigma_0+\sqrt{P})^2+d_m(L)\sigma_G^2\right)},
\end{eqnarray*}
 $c_m(L):=\frac{1}{\Pr(\|\m{Z}\|^2\leq mL^2)}= \left(1-\psi(m,L\sqrt{m})\right)^{-1}$, 
$d_m(L):=\frac{\Pr(\|\mk{Z}{m+2}\|^2\leq mL^2)}{\Pr(\|\m{Z}\|^2\leq mL^2)} =
\frac{1-\psi(m+2,L\sqrt{m})}{1-\psi(m,L\sqrt{m})}$, \\$0< d_m(L)<1$, and 
$\psi(m,r)=\Pr(\|\m{Z}\|\geq r)$.
Thus the following lower bound holds on the total cost
\begin{equation}
\overline{J}_{\min}(m,k^2,\sigma_0^2) \geq \inf_{P\geq 0} k^2P + 
\eta(P,\sigma_0^2,\sigma_G^2,L),
\end{equation}
for any choice of $\sigma_G^2\geq 1$ and $L>0$ (the choice can depend on $P$). Further, these bounds are at least as tight as those of Theorem~\ref{thm:finiteratiosynergy} for all values of $k$ and $\sigma_0^2$.
\end{theorem}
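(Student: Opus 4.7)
The plan is to refine the bound of Theorem~\ref{thm:finiteratiosynergy} by replacing its asymptotic mutual-information step with a one-shot argument that combines (i)~a Gaussian test-noise change of measure with (ii)~conditioning on a high-probability norm-truncation event, in the spirit of the finite-length analysis of Witsenhausen's counterexample in~\cite{FiniteLengthsWitsenhausen}. As in Theorem~\ref{thm:finiteratiosynergy}, the first step would be to substitute $A=\m{X}_0$, $B=\m{X}_1$, $C=\m{U}_2$ in Lemma~\ref{lem:triangle} to reduce the problem to lower-bounding $\tfrac{1}{m}\expect{\|\m{X}_0-\m{U}_2\|^2}$; writing such a lower bound as $(\text{prefactor})\cdot\kappa_2$ and plugging back into the Lemma produces the claimed $\left(\left(\sqrt{\kappa_2}-\sqrt{P}\right)^+\right)^2$ shape.

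The second step introduces the two finite-length corrections. First, restrict to the truncation event $\mathcal{E}_L=\{\|\m{Z}\|\le L\sqrt m\}$, which has probability $1/c_m(L)$ by definition, and lower-bound the unconditional MSE by this probability times the MSE conditional on $\mathcal{E}_L$. Second, perform a Radon--Nikodym change of measure from the real iid-$\mathcal{N}(0,1)$ noise to a Gaussian test noise $\m{Z}_G\sim\mathcal{N}(0,\sigma_G^2 I_m)$ with $\sigma_G^2\ge 1$; since the Gaussian density ratio $f_{\m{Z}}/f_{\m{Z}_G}$ is a monotone function of $\|\m{z}\|^2$, its minimum over the ball $\{\|\m{z}\|\le L\sqrt m\}$ yields, after absorbing normalizing constants, exactly the outer prefactor $\sigma_G^m\exp\!\left(-mL^2(\sigma_G^2-1)/2\right)$ appearing in $\eta$. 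These two reductions together produce the multiplicative factor $\frac{\sigma_G^m}{c_m(L)}\exp(-mL^2(\sigma_G^2-1)/2)$ in front of $\left(\left(\sqrt{\kappa_2}-\sqrt{P}\right)^+\right)^2$.

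The third step applies Shannon's vector rate--distortion lower bound under the test noise $\m{Z}_G$. The decoder's total information about $\m{X}_0$ is at most $mR_{ex}+I(\m{X}_0;\m{X}_1+\m{Z}_G\mid W)\le mR_{ex}+\tfrac{m}{2}\lo{1+\overline{P}/\sigma_G^2}$, using $\tfrac{1}{m}\expect{\|\m{X}_1\|^2}\le\overline{P}=(\sigma_0+\sqrt{P})^2$ combined with the Gaussian-capacity formula; substituting into the Shannon bound with $h(\m{X}_0)=\tfrac{m}{2}\lo{2\pi e\sigma_0^2}$ recovers the asymptotic $\kappa_{new}$ of Theorem~\ref{thm:finiteratiosynergy}. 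Conditioning the entire argument on $\mathcal{E}_L$ then introduces the residual corrections appearing in $\kappa_2$: the factor $c_m^{2/m}(L)$ from the entropy cost of the indicator of $\mathcal{E}_L$, and the replacement of $\sigma_G^2$ by $d_m(L)\sigma_G^2$ together with $e^{1-d_m(L)}$, both traceable to the standard truncated-Gaussian second-moment identity $\expect{\|\m{Z}_G\|^2\mid\mathcal{E}_L}=m\,d_m(L)\sigma_G^2$ (equivalently, the chi-square ratio between dimensions $m$ and $m+2$). Combining with Lemma~\ref{lem:triangle}, adding $k^2P$, and infimizing over $P$ delivers the claimed total-cost bound.

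Tightness relative to Theorem~\ref{thm:finiteratiosynergy} follows by taking $L\to\infty$ with $\sigma_G^2=1$: $c_m(L),d_m(L)\to 1$ and the outer prefactor collapses to $1$, so $\eta$ reduces to $\left(\left(\sqrt{\kappa_{new}}-\sqrt{P}\right)^+\right)^2$; since the new bound holds for \emph{every} admissible $(L,\sigma_G^2)$, the best such choice can only tighten it. The main obstacle I anticipate is the bookkeeping in the third step --- verifying that the pointwise density-ratio prefactor from the change of measure composes cleanly with the conditional Shannon bound, so that the finite-length corrections land exactly as $c_m^{2/m}(L)$, $e^{1-d_m(L)}$, and $d_m(L)\sigma_G^2$ inside $\kappa_2$, rather than leaking into the outer prefactor or producing a weaker constant.
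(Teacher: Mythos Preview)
Your ingredients are exactly those of the paper --- truncation, Gaussian change of measure, rate--distortion under the truncated test noise, and the triangle inequality of Lemma~\ref{lem:triangle} --- but the order in which you compose them does not produce the stated $\eta$. You propose to apply Lemma~\ref{lem:triangle} \emph{first} under the unit-variance noise, reducing to a lower bound on $\expect{\|\m{X}_0-\m{U}_2\|^2}$, and only then to change measure on that quantity. But if $\tfrac{1}{m}\expect{\|\m{X}_0-\m{U}_2\|^2}\ge(\text{prefactor})\cdot\kappa_2$, plugging back into the triangle inequality gives
\[
\tfrac{1}{m}\expect{\|\m{X}_1-\m{U}_2\|^2}\ \ge\ \left(\left(\sqrt{(\text{prefactor})\,\kappa_2}-\sqrt{P}\right)^+\right)^{2},
\]
not $(\text{prefactor})\cdot\big((\sqrt{\kappa_2}-\sqrt{P})^+\big)^{2}$; the prefactor does not factor through the square root, so the ``claimed shape'' in your first paragraph does not follow. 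The paper therefore reverses the order: it applies the truncation and the density-ratio change of measure \emph{directly to} $J_2^{(\gamma)}(\m{X}_0,\m{Z})$, obtaining the full prefactor $\frac{\sigma_G^m}{c_m(L)}e^{-mL^2(\sigma_G^2-1)/2}$ outside, multiplied by $\expectp{\m{X}_0,\m{Z}_G}{J_2^{(\gamma)}(\m{X}_0,\m{Z}_G)\mid\m{Z}_G\in\mathcal{S}_L^G}$. Only \emph{inside} this conditional $G$-expectation are Lemma~\ref{lem:triangle} and the rate--distortion step invoked, producing the $((\sqrt{\kappa_2}-\sqrt{P})^+)^2$ factor; that is precisely the content of the paper's Lemma~\ref{lem:epg}.

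A second, smaller correction: for the prefactor and the $1/c_m(L)$ probability to come out as claimed, the truncation ball must be $\mathcal{S}_L^G=\{\m{z}:\|\m{z}\|^2\le mL^2\sigma_G^2\}$, i.e.\ radius $L\sigma_G\sqrt{m}$, not $L\sqrt{m}$. With that radius, the minimum of $f_Z/f_G$ over the ball is indeed $\sigma_G^m e^{-mL^2(\sigma_G^2-1)/2}$, and the probability under the $G$-measure is $\Pr(\|\m{Z}_G\|^2\le mL^2\sigma_G^2)=\Pr(\|\m{Z}\|^2\le mL^2)=1/c_m(L)$; with radius $L\sqrt{m}$ neither identity holds as stated.
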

\begin{proof}
We remark that the only difference in this lower bound as compared to that in~\cite{FiniteLengthsWitsenhausen} is the term for $R_{ex}$ in the expression for $\kappa_2$. The proof follows along the lines of that of~\cite[Theorem 3]{FiniteLengthsWitsenhausen}. See Appendix~\ref{app:finitelower} for the proof.
\end{proof}

\begin{theorem}[Upper bound for scalar case]
\label{thm:scalarupper}
An upper bound on costs for the scalar case is given by
\begin{eqnarray*}
&\cost{}_{opt}\leq \min\{ \inf_{P\geq 2^{-2R_{ex}}}k^2 P + \psi(3,2^{2R_{ex}}P),\\
&  ck^2\sigma_0^2, c \sigma_0^22{-2R_{ex}},\\
& a^22^{-2R_{ex}}+(1+a)^2 e^{-\frac{a^2}{2}+\frac{3}{2}(1+\lon{a^2})} \},
\end{eqnarray*}
where\footnote{This upper bound on $c$ is the believed upper bound on the distortion-rate function $D_s(R)=c\sigma_0^22^{-2R}$ of a scalar Gaussian source. We have been unable to find a rigorous proof of this result, although the result is known to holds at high rates~\cite{PanterDite}, and Lloyd's empirical results~\cite[Table VIII]{lloyd} suggest that the bound holds for all rates.} $c\leq 2.72$ and $\psi(m,r)$ is defined in Theorem~\ref{thm:newbound}. 
\end{theorem}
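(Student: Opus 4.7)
The theorem is a minimum of four explicit upper bounds; the plan is to realize each one as the expected cost of a concrete scalar strategy and then take the minimum. Each of the four is a scalar specialization of a strategy that appears in the proof of Theorem 1, so I can largely follow that template while paying careful attention to the absence of block-length averaging.

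For the first bound I would use the scalar-binning strategy from Fig.~\ref{fig:binning}: pick a uniform lattice on $\mathbb{R}$ whose spacing is tied to $P$, set $u_1 = Q(x_0) - x_0$ so that $x_1 = Q(x_0)$, and randomly color consecutive blocks of $2^{R_{ex}}$ lattice points to form $2^{R_{ex}}$ cosets. The coset index is sent on the external channel and the decoder picks the nearest point of the declared coset to $y_2 = x_1 + z$. The input cost is exactly $k^2 P$; a second-stage error occurs only if $z$ drives $y_2$ past the midpoint between adjacent same-coset lattice points, and I expect to bound the probability-weighted failure cost by the chi-tail $\psi(3, 2^{2R_{ex}} P)$ after absorbing the geometric constants. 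The constraint $P \geq 2^{-2R_{ex}}$ is exactly the condition that the forcing input reaches the native binning resolution, below which the strategy is degenerate. The second and third bounds are scalarized baselines: quantize $x_0$ with a scalar rate-$R_{ex}$ Gaussian rate-distortion code of distortion at most $c \sigma_0^2 2^{-2R_{ex}}$ (using the scalar distortion-rate constant $c \leq 2.72$ from the footnote), transmit the quantizer index on the external channel, and either force $x_1$ onto the quantizer point (paying $k^2$ times the distortion, which gives the second bound) or leave $u_1 = 0$ and rely on the decoder's description alone (paying the distortion itself, which gives the third).

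The fourth bound requires a parameterized binning strategy in which a free scalar $a > 0$ trades input cost against decoding reliability. The input power contribution is $a^2 2^{-2R_{ex}}$, and on a decoding failure the squared error is clamped to $(1+a)^2$, the maximum squared deviation inside a decoding ball of radius $1+a$ around $y_2$. The failure probability is then controlled by a union bound over alternative codewords that fall inside the decoding ball --- the factor $e^{\frac{3}{2}(1+\ln(a^2))}$ arises as a volume-to-spacing ratio expanded against a three-moment normalization, yielding an $a^3 e^{3/2}$ codeword count --- multiplied by the Gaussian tail $e^{-a^2/2}$ of the noise escaping the intended ball; the bound is then minimized over $a$. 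The main obstacle will be Strategies $1$ and $4$: unlike the vector setting of Theorem 1, the scalar regime offers no block-length averaging, so both the error events and the on-failure squared costs have to be controlled pointwise using sharp Gaussian tail bounds together with an explicit clamping of the decoder output to a bounded interval. In particular, matching the exact $\psi(3,\cdot{})$ form in the first bound will require choosing the decoder geometry carefully so that the worst-case failure cost telescopes into precisely the stated chi tail rather than a larger scalar Gaussian tail.
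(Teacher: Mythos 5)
Your first three strategies match the paper's proof: the mod-$2^{R_{ex}}$ colored uniform lattice with nearest-same-index decoding gives the first term, and the rate-$R_{ex}$ scalar quantizer used either with zero-forcing onto the quantization point or with zero input gives the $ck^2\sigma_0^22^{-2R_{ex}}$ and $c\sigma_0^22^{-2R_{ex}}$ terms. (One small caution on the first: the coloring must be the \emph{same cyclic order} in every block of $2^{R_{ex}}$ consecutive lattice points, as in the paper's $i \bmod 2^{R_{ex}}$ labeling; an independent random coloring per block can place two same-colored points only one lattice step apart and destroys the $2^{R_{ex}}\sqrt{P}$ minimum-distance guarantee on which the $\psi(3,\cdot)$ error term rests.)

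The fourth bound is where your proposal genuinely breaks. In the paper this term comes from a pure \emph{zero-input} strategy ($u_1=0$): the line is cut into coarse bins of width $2a$, each split into $2^{R_{ex}}$ sub-bins of width $2a2^{-R_{ex}}$, and only the sub-bin index is sent on the external channel. The quantity $a^22^{-2R_{ex}}$ is the second-stage squared error under \emph{correct} sub-bin decoding, not an input power; if it were input power, as in your scheme, the cost would carry a factor $k^2$ that is absent from the stated bound, so your construction cannot yield the claimed expression for general $k$. Moreover, your treatment of the failure event --- failure probability bounded by a codeword-count ($a^3e^{3/2}$) union bound times the Gaussian tail $e^{-a^2/2}$, multiplied by an error ``clamped'' to $(1+a)^2$ --- is not valid: when the decoder picks the wrong same-index sub-bin, the estimate can be far from $x_1$ (the error is of order $|z|+a$ with $z$ unbounded), so the conditional squared error on failure is not bounded by $(1+a)^2$. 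The paper instead bounds the failure contribution directly as $\expect{(|z|+a)^2\indi{|z|>a}}$ and invokes the moment/tail lemma of the finite-length Witsenhausen paper to get $(\sqrt{\psi(3,a)}+a\sqrt{\psi(1,a)})^2\leq(1+a)^2e^{-a^2/2+\frac{3}{2}(1+\lon{a^2})}$ for $a>1$; the factor $e^{3/2}a^3$ you reverse-engineered as a codeword count is really the chi-tail bound on $\psi(3,a)$. To repair your write-up, replace your fourth strategy with this zero-input coarse/fine binning and bound the truncated second moment of the noise rather than a probability-times-clamp product.
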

\begin{proof}
Just as for the asymptotic case, each term in the upper bound corresponds to a certain strategy.\\
\textbf{Quantization}\\
Divide the real line into uniform quantization bins of size $\sqrt{P}$. The quantization points are located at the center of these bins. Number consecutive bins $i (\text{mod}\; 2^{R_{ex}})$ starting with bin $0$ which contains the origin. The encoder forces the initial state to the quantization point closest to the initial state, requiring a power of at most $P$. It also sends the index of the quantization bin on the external channel.

The decoder looks at the bin-index, and finds the nearest quantization point corresponding to the particular bin-index. The resulting MMSE error is given by $\expect{z^2\indi{|z|>2^{R_{ex}} \sqrt{P}}}$. This is shown to equal $\psi(3,2^{R_{ex}} \sqrt{P})$ in~\cite{FiniteLengthsWitsenhausen}. This yields the first term.

\textbf{Analog of zero-forcing}\\
Quantize the real-line using a quantization codebook of rate $R_{ex}$. The encoder forces $x_0$ to the nearest quantization point, and sends the index of the point to the decoder. The distortion is  bounded by $2.72\sigma_0^22^{-2R_{ex}}$~\cite{GrayNeuhoff}. The decoder has a perfect estimate of $x_1$, thus the total cost is given by $k^2c\sigma_0^22^{-2R_{ex}}$.

\textbf{Analog of zero-input}\\
As for the asymptotic case, we break this case into two strategies. For $\sigma_0^2\leq 4$, we again use a quantization codebook of rate $R_{ex}$, but instead of zero-forcing the state, we take the distortion hit at the decoder. The resulting cost is $c\sigma_0^22^{-2R_{ex}}$. 

For $\sigma_0^2>4$, we use a construct based on the idea of sending coarse information across the implicit channel, and fine information across the explicit channel. Divide the entire line into coarse quantization-bins of size $2a$. Divide each bin into $2^{R_{ex}}$ sub-bins, each of size $2a2^{-R_{ex}}$. Number each of the sub-bins in any sub-bin from $0,1,\ldots,2^{R_{ex}}$. 

The encoder send the index of the sub-bin in which $x_0$ lies across the external channel. The decoder decodes this sub-bin by finding the nearest sub-bin to the received output that has the same index as that received across the external channel. 

If the decoder decodes the correct sub-bin, the error is bounded by $a^22^{-2R_{ex}}$. In the event when there is an error in decoding of the sub-bin, the error is bounded by $(|z|+a)^2$, which averaged under the error event $|z|>a$ takes exactly the form of~\cite[Lemma 1]{FiniteLengthsWitsenhausen}. Using that lemma, the MMSE in the error-event is bounded by
\begin{eqnarray*}
\expect{(|z|+a)^2\indi{|z|>a}} &\leq& (\sqrt{\psi(3,a)}+a\sqrt{\psi(1,a)})^2\\
& \overset{a>1}{\leq} &(1+a)^2 e^{-\frac{a^2}{2}+\frac{3}{2}(1+\lon{a^2})}.
\end{eqnarray*}
Thus the total $MMSE$ is bounded by
\begin{eqnarray*}
MMSE\leq a^22^{-2R_{ex}}+(1+a)^2 e^{-\frac{a^2}{2}+\frac{3}{2}(1+\lon{a^2})}.
\end{eqnarray*}
%:
%:
\end{proof}

\begin{figure}[htbp] %  figure placement: here, top, bottom, or page
   \centering
   \includegraphics[width=3.55in]{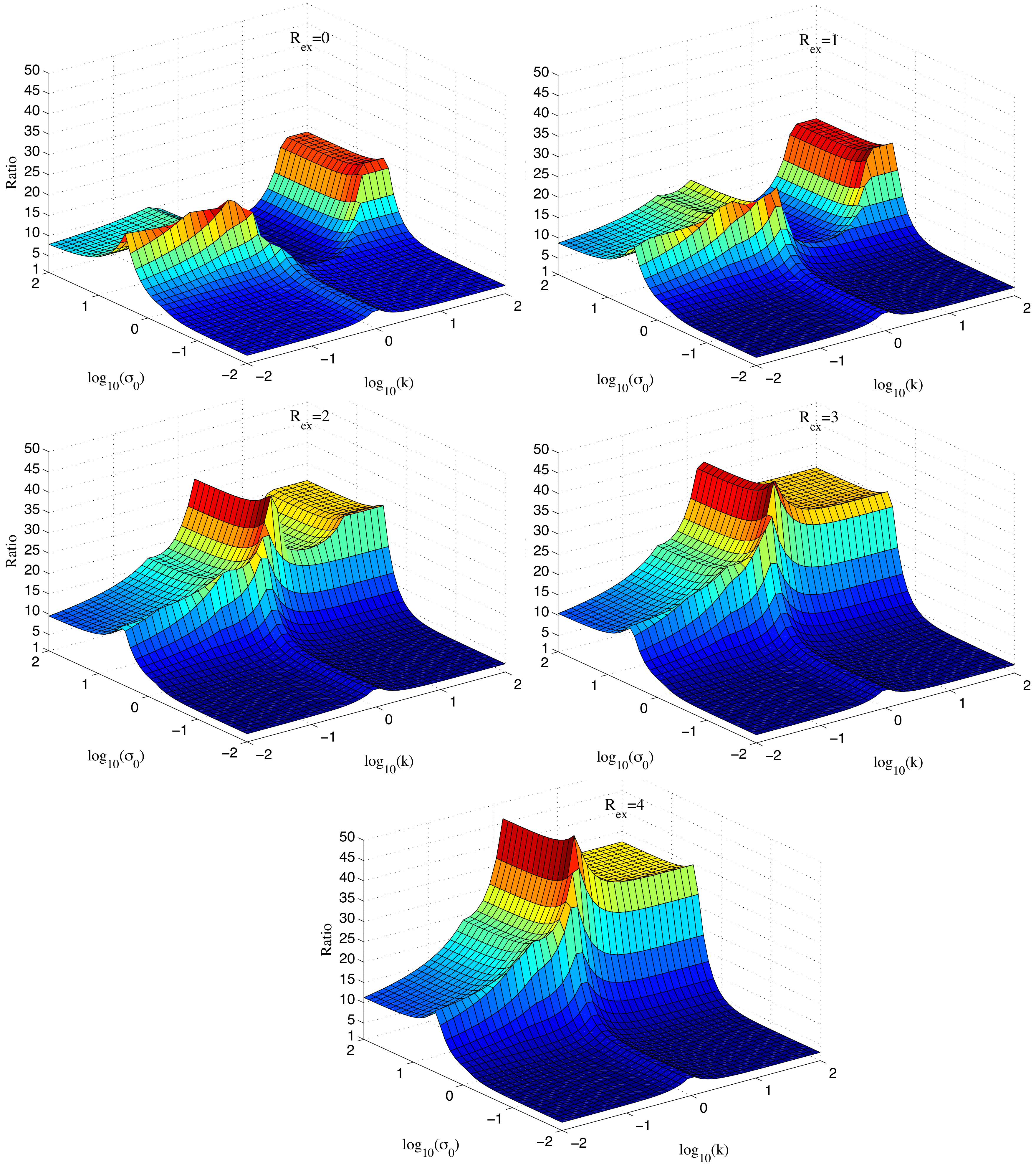} 
   \caption{Ratio of upper and lower bounds on the scalar problem for various values of $R_{ex}$. The ratio diverges to infinity as $R_{ex}\rightarrow\infty$.}
   \label{fig:scalar}
\end{figure}

Fig.~\ref{fig:scalar} shows ratio of upper bound of Theorem~\ref{thm:scalarupper} and lower bound of Theorem~\ref{thm:newbound} in the $(k,\sigma_0)$-parameter space. Even though the ratio is bounded for each $R_{ex}$, it blows up as $R_{ex}\rightarrow\infty$. 
%%%%%%%%%%%%%%%%%%%%%%%%%%

\vspace{-0.1in}

\section{Discussions and conclusions}
\label{sec:conclusions}
The asymptotic result in Section~\ref{sec:asymptotic} extends easily to an asymptotic version of problem with a Gaussian external channel (Section~\ref{sec:gaussian}). This is because the error probability on the external channel converges to zero as the vector length $m\rightarrow\infty$ for any $R_{ex}<C_{ex}$, the capacity of the external channel, making it behave like a fixed rate external channel. Using large-deviation techniques, there is hope that the scalar problem with Gaussian external channel may also be solved approximately. 

%To obtain uniform approximate-optimality for scalar and finite-length problems, Wyner-Ziv coding for constrained rate need to be studied in greater detail. 

A rate-limited noiseless channel can be thought of as a model for limited-memory controllers. The problem of Fig.~\ref{fig:synergy} can then be interpreted as a single controller system with finite memory. The problem problem considered here is also a toy problem that can design strategies for finite-memory controller problems.

%Our problem can be thought of as a decentralized-control version of the problem of the Wyner-Ziv problem in information theory~\cite{WynerZiv}. 

\vspace{-0.1in}

\section*{Acknowledgments}

\vspace{-0.05in}

We would like to acknowledge most stimulating discussions with Tamer Ba\c{s}ar while writing this paper. We also thank Aditya Mahajan for references, and Gireeja Ranade, Pravin Varaiya and Jiening Zhan for helpful discussions. Support of grant NSF CNS-0932410 is gratefully acknowledged. 

\vspace{-0.05in}

\appendices
\section{Proof of lower bound for finite-length problem}
\label{app:finitelower}
\begin{proof}
From Theorem~\ref{thm:finiteratiosynergy}, for a given $P$, a lower bound on the average second stage cost is $\left(\left( \sqrt{\kappa_{new}}-\sqrt{P}   \right)^+\right)^2$. We derive
 another lower bound that is equal to the 
 expression for $\eta(P,\sigma_0^2,\sigma_G^2,L)$. 

 Define $\mathcal{S}_L^G:=\{\m{z}:\|\m{z}\|^2\leq mL^2\sigma_G^2\}$
 and use subscripts to denote which probability model is being used for the second stage observation noise. $Z$ denotes white Gaussian
 of variance $1$ while $G$ denotes white Gaussian of variance
 $\sigma_G^2\geq 1$. 
\begin{eqnarray}
\nonumber&\expectp{\m{X}_0,\m{Z}}{J_2^{(\gamma)}(\m{X}_0,\m{Z})}\\
\nonumber&=  \int_{\m{z}}\int_{\m{x}_0}J_2^{(\gamma)}(\m{x}_0,\m{z}) f_0(\m{x}_0) f_Z(\m{z}) d\m{x}_0 d\m{z}\\
\nonumber &\geq  \int_{\m{z}\in\mathcal{S}_L^G}\left(\int_{\m{x}_0}J_2^{(\gamma)}(\m{x}_0,\m{z}) f_0(\m{x}_0) d\m{x}_0\right) f_Z(\m{z}) d\m{z}\\
&= \int_{\m{z}\in\mathcal{S}_L^G}\left(\int_{\m{x}_0}J_2^{(\gamma)}(\m{x}_0,\m{z}) f_0(\m{x}_0) d\m{x}_0\right)\nonumber\\
&\frac{f_Z(\m{z})}{f_G(\m{z})}f_G(\m{z}) d\m{z}.
\label{eq:beforeratio}
\end{eqnarray}
%Here $J_2^{(0)}(\m{z})$ denotes the term in parentheses --- the
%second stage cost averaged over realizations of $\m{x}_0$ for a fixed
%$\m{z}$. 
The ratio of the two probability density functions is given by
\begin{eqnarray*}
\frac{f_Z(\m{z})}{f_G(\m{z})}=\frac{e^{-\frac{\|\m{z}\|^2}{2}}}{\left(\sqrt{2\pi}\right)^m}\frac{\left(\sqrt{2\pi\sigma_G^2}\right)^m}{e^{-\frac{\|\m{z}\|^2}{2\sigma_G^2}}}=\sigma_G^m e^{-\frac{\|\m{z}\|^2}{2}\left(1-\frac{1}{\sigma_G^2}\right)}.
\end{eqnarray*}
Observe that $\m{z}\in\mathcal{S}_L^G$, $\|\m{z}\|^2\leq mL^2\sigma_G^2$. Using $\sigma_G^2\geq 1$, we obtain
\begin{equation}
\frac{f_Z(\m{z})}{f_G(\m{z})}
\geq \sigma_G^m 
e^{-\frac{m L^2 \sigma_G^2}{2}\left(1-\frac{1}{\sigma_G^2}\right)}
= \sigma_G^m e^{-\frac{mL^2(\sigma_G^2-1)}{2}}.
\label{eq:afterratio}
\end{equation}
Using~\eqref{eq:beforeratio} and~\eqref{eq:afterratio},
\begin{eqnarray}
&\nonumber\expectp{\m{X}_0,\m{Z}}{J_2^{(\gamma)}(\m{X}_0,\m{Z})} \\
\nonumber&\geq \sigma_G^m e^{-\frac{mL^2(\sigma_G^2-1)}{2}}\times \\
&\int_{\m{z}\in\mathcal{S}_L^G}\left(\int_{\m{x}_0}J_2^{(\gamma)}(\m{x}_0,\m{z}) f_0(\m{x}_0) d\m{x}_0\right)  f_G(\m{z}) d\m{z}\nonumber\\
\nonumber&=\sigma_G^m e^{-\frac{mL^2(\sigma_G^2-1)}{2}}\expectp{\m{X}_0,\m{Z}_G}{J_2^{(\gamma)}(\m{X}_0,\m{Z}_G)\indi{\m{Z}_G\in\mathcal{S}_L^G}}\\
&=\sigma_G^m
e^{-\frac{mL^2(\sigma_G^2-1)}{2}}\nonumber\\
&\expectp{\m{X}_0,\m{Z}_G}{J_2^{(\gamma)}(\m{X}_0,\m{Z}_G)|\m{Z}_G\in\mathcal{S}_L^G}\Pr(\m{Z}_G\in\mathcal{S}_L^G).
\label{eq:explb}
\end{eqnarray}
It is shown in~\cite{FiniteLengthsWitsenhausen} that
\begin{eqnarray}
\Pr(\m{Z}_G\in\mathcal{S}_L^G) = \frac{1}{c_m(L)}.
\label{eq:sphereprob}
\end{eqnarray}
 From~\eqref{eq:explb} and~\eqref{eq:sphereprob},
\begin{eqnarray}
\nonumber &\expectp{\m{X}_0,\m{Z}}{J_2^{(\gamma)}(\m{X}_0,\m{Z})}\\
\nonumber&\geq  \sigma_G^m
e^{-\frac{mL^2(\sigma_G^2-1)}{2}}\expectp{\m{X}_0,\m{Z}_G}{J_2^{(\gamma)}(\m{X}_0,\m{Z}_G)|\m{Z}_G\in\mathcal{S}_L^G}\nonumber\\
&(1-\psi(m,L\sqrt{m}))\nonumber\\
& =  \frac{\sigma_G^m e^{-\frac{mL^2(\sigma_G^2-1)}{2}}}{c_m(L)}  \expectp{\m{X}_0,\m{Z}_G}{J_2^{(\gamma)}(\m{X}_0,\m{Z}_G)|\m{Z}_G\in\mathcal{S}_L^G}.
\label{eq:ep0z}
\end{eqnarray}
We now need the following lemma, which connects the new finite-length lower bound to the length-independent lower bound of Theorem~\ref{thm:finiteratiosynergy}.
\begin{lemma}
\label{lem:epg}
\begin{eqnarray*}
&&\expectp{\m{X}_0,\m{Z}_G}{J_2^{(\gamma)}(\m{X}_0,\m{Z}_G)|\m{Z}_G\in \mathcal{S}_L^G}\\
&\geq &\left(\left(   \sqrt{\kappa_2 (P,\sigma_0^2,\sigma_G^2,L)} -\sqrt{P}           \right)^+\right)^2,
\end{eqnarray*}
for any $L>0$.
\end{lemma}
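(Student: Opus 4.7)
The plan is to mimic the lower-bound argument of Theorem~\ref{thm:finiteratiosynergy}, but working throughout conditional on $\m{Z}_G \in \mathcal{S}_L^G$, and carefully tracking how truncation of the test noise to a sphere modifies the effective noise variance, differential entropy, and entropy power. The structure parallels that of \cite[Theorem 3]{FiniteLengthsWitsenhausen}; the only new content is the additional $m R_{ex}$ bits available through the explicit channel, which enters the rate-distortion calculation multiplicatively and produces the extra factor $2^{-2R_{ex}}$ in $\kappa_2$ relative to the analogous quantity in that reference.

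First I would apply Lemma~\ref{lem:triangle} to the conditional $L^2$ norms on $\mathcal{S}_L^G$, with $A=\m{X}_0$, $B=\m{X}_1$, $C=\m{U}_2$, conditioning every expectation on the event $\{\m{Z}_G \in \mathcal{S}_L^G\}$. Since $\m{X}_0 - \m{X}_1 = -\m{U}_1 = -\gamma_1(\m{X}_0)$ is a function of $\m{X}_0$ alone and $\m{X}_0$ is independent of $\m{Z}_G$, the conditional second moment $\expectp{}{\|\m{X}_0-\m{X}_1\|^2 \mid \m{Z}_G\in\mathcal{S}_L^G}$ coincides with the unconditional value $mP$. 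Thus the problem reduces to lower-bounding $\expectp{}{\|\m{X}_0-\m{U}_2\|^2 \mid \m{Z}_G\in\mathcal{S}_L^G}$, after which one divides by $m$, takes positive part, and squares.

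Next I would obtain a rate-distortion lower bound on the conditional MMSE of reconstructing $\m{X}_0$ from $(W, \m{Y}_2)$, where $\m{Y}_2 = \m{X}_1 + \m{Z}_G$ with $\m{Z}_G$ restricted to $\mathcal{S}_L^G$. The sum mutual information across the parallel decomposition is bounded by $mR_{ex}$ (explicit) plus the conditional capacity of the implicit channel. For the latter, the signal $\m{X}_1$ has per-coordinate second moment at most $\overline{P}=(\sigma_0+\sqrt{P})^2$, while the truncated noise $\m{Z}_G \mid \m{Z}_G\in\mathcal{S}_L^G$ has per-coordinate second moment $d_m(L)\sigma_G^2$ (the standard identity that $\expect{\|\m{Z}_G\|^2\,\indi{\|\m{Z}_G\|^2\leq mL^2\sigma_G^2}}$ reduces, after a change of variable in the Gaussian integral, to the ratio of $\chi^2_{m+2}$ and $\chi^2_m$ tail probabilities defining $d_m(L)$). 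Applying the maximum-entropy bound to $h(\m{Y}_2 \mid \m{Z}_G\in\mathcal{S}_L^G)$ and computing $h(\m{Z}_G \mid \m{Z}_G\in\mathcal{S}_L^G)$ exactly from the truncated-Gaussian density then yields a per-coordinate implicit-capacity bound in which the correction factor $c_m^{2/m}(L)$ (arising from the normalization $-\log\Pr(\m{Z}_G\in\mathcal{S}_L^G)$) and the factor $e^{1-d_m(L)}$ (arising from the $\frac{1}{2}E[\|\m{Z}_G\|^2/\sigma_G^2\mid\cdot]$ term in the exact differential entropy) appear.

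Finally I would invoke Shannon's rate-distortion lower bound for a Gaussian source of variance $\sigma_0^2$: the distortion is at least $\sigma_0^2 \cdot 2^{-2C_{sum}/m}$ per coordinate. Substituting the above capacity bounds yields precisely $m\kappa_2(P,\sigma_0^2,\sigma_G^2,L)$, and combining with the triangle inequality delivers the stated bound. The main obstacle will be the careful computation of $h(\m{Z}_G \mid \m{Z}_G\in\mathcal{S}_L^G)$ and the associated application of the maximum-entropy bound: this is the step that must produce the exact constants $c_m^{2/m}(L)$ and $e^{1-d_m(L)}$ that distinguish the finite-length bound from its asymptotic counterpart, and it is also the step where one must verify that the non-Gaussianity of the truncated noise does not cost anything beyond these correction factors.
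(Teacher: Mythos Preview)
Your proposal is correct and follows essentially the same route the paper indicates: the paper's proof of Lemma~\ref{lem:epg} explicitly says it is ``a reworking of the proof for the asymptotic case'' with truncated Gaussian test noise, carried out ``exactly [along] the lines of~\cite[Lemma 2]{FiniteLengthsWitsenhausen},'' and your outline (triangle inequality conditionally on $\mathcal{S}_L^G$, rate--distortion bound on reconstructing $\m{X}_0$, explicit channel contributing $mR_{ex}$, implicit channel bounded via max-entropy on $\m{Y}_2$ and an exact computation of $h(\m{Z}_G\mid\m{Z}_G\in\mathcal{S}_L^G)$ producing the $c_m^{2/m}(L)$ and $e^{1-d_m(L)}$ factors) is precisely that reworking. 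Your sketch is in fact more detailed than the paper's own proof, which omits the derivation entirely due to space constraints.
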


\begin{proof}% OF LEMMA
This is a reworking of the proof for the asymptotic case to a channel which has a truncated Gaussian noise of (pre-truncation) variance $\sigma_G^2$ and a truncation for $|Z_G|\leq L$. Details are omitted due to space constraints. The derivation follows exactly the lines of~\cite[Lemma 2]{FiniteLengthsWitsenhausen}.
\end{proof} % OF LEMMA
The lower bound on the total average cost now follows from~\eqref{eq:ep0z} and Lemma~\ref{lem:epg}. 
\end{proof}

\bibliographystyle{IEEEtran}
\bibliography{IEEEabrv,MyMainBibliography}

\end{document}